\newcommand{\defn}[1]{\emph{\textbf{{#1}}}}
\newcommand{\E}{\mathbb{E}}
\renewcommand{\paragraph}[1]{\vspace{.5 cm} \noindent \textbf{#1} }
\newtheoremstyle{slanted}% <name>
{3pt}% <Space above>
{3pt}% <Space below>
{\slshape}% <Body font>
{}% <Indent amount>
{\bfseries}% <Theorem head font>
{.}% <Punctuation after theorem head>
{.5em}% <Space after theorem heading>
{}% <Theorem head spec (can be left empty, meaning `normal')>
\theoremstyle{slanted}
\newtheorem{theorem}{Theorem}%[section]
\newtheorem{lemma}[theorem]{Lemma}
\newtheorem{claim}[theorem]{Claim}
\newtheorem{proposition}[theorem]{Proposition}
\newtheorem{corollary}[theorem]{Corollary}
\begin{document}

% define simple version (fixed demand)
% define task allocation with motivation from other paper (incl memoryless)
%prior work and our results
% application to hamming (bill will write background)
% 
%\title{Algorithms and Lower Bounds for the Worker-Task Assignment Problem}
\title{A Simple and Combinatorial Approach to Proving  Chernoff Bounds and Their Generalizations \\ (with Almost no Algebra)}
\author{William Kuszmaul\footnote{CMU. \texttt{kuszmaul@cmu.edu}}}

%\author{Aaron Berger\footnote{supported by NSF Graduate Research Fellowship Program DGE-1122374} \\MIT\and William Kuszmaul\footnote{supported by an NSF GRFP fellowship and a Fannie and John Hertz Fellowship. Research was partially sponsored by the United States Air Force Research Laboratory and was accomplished under Cooperative Agreement Number FA8750-19-2-1000. The views and conclusions contained in this document are those of the authors and should not be interpreted as representing the official policies, either expressed or implied, of the United States Air Force or the U.S. Government. The U.S. Government is authorized to reproduce and distribute reprints for Government purposes notwithstanding any copyright notation herein.}\\ MIT\and Adam Polak\footnote{supported by the National Science Center of Poland grant 2017/27/N/ST6/01334}\\ Jagiellonian University\and Jonathan Tidor\footnote{supported by NSF Graduate Research Fellowship Program DGE-1122374}\\ MIT\and Nicole Wein\footnote{supported by NSF Grant CCF-1514339}\\ MIT}

\date{}

%\title[Achieving polylogarithmic switching cost]{Achieving polylogarithmic switching cost}
\maketitle

\begin{abstract}
The Chernoff bound is one of the most widely used tools in theoretical computer science. It's rare to find a randomized algorithm that \emph{doesn't} employ a Chernoff bound in its analysis.

The standard proofs of Chernoff bounds are beautiful but in some ways not very intuitive. In this paper, I'll show you a different proof that has four features:
\begin{itemize}[noitemsep]
\item the proof offers a strong intuition for why Chernoff bounds look the way that they do;
\item the proof is user-friendly and (almost) algebra-free;
\item the proof comes with matching lower bounds, up to constant factors in the exponent;
\item the proof extends to establish generalizations of Chernoff bounds in other settings.
\end{itemize}

The ultimate goal is that, once you know this proof (and with a bit of practice), you should be able to confidently reason about Chernoff-style bounds in your head, extending them to other settings, and convincing yourself that the bounds you're obtaining are tight (up to constant factors in the exponent).

\end{abstract}
\vfill
\pagebreak

\section{Introduction}

The Chernoff bound is one of the most widely used tools in theoretical computer science. It's rare to find a randomized algorithm that \emph{doesn't} employ a Chernoff bound in its analysis.

The standard proofs of Chernoff bounds are beautiful but in some ways not very intuitive. In this paper, I'll show you a different proof that, as far as I can tell, has not appeared in the literature. The proof will have four noteworthy features:
\begin{enumerate}
\item \textbf{Intuition: }The proof offers a clean intuition for why Chernoff bounds have the shapes that they do.
\item \textbf{Lower Bounds: }The proof comes with matching \emph{lower bounds}. In fact, one way to perform the proof is to first prove lower bounds, and then directly argue that those lower bounds are tight (up to constants in the exponent).
\item \textbf{User-Friendliness: }Most proofs of Chernoff bounds require various identities (and Taylor series approximations) to obtain the final user-friendly bound. The proof here will lead directly to a user-friendly bound without requiring intermediate algebra.
\item \textbf{Generality: }The proof can be extended to establish many of the classical generalizations of Chernoff bounds (e.g., Hoeffding bounds, Azuma's inequality, Bernstein-type inequalities, Bennett's inequality, etc.). In other words, the proof isn't just a party trick. 
\end{enumerate}

The proof will also have a noteworthy drawback: It will give the correct bound \emph{up to constant factors in the exponent}. If those constant factors are important to you, then you'll want to use other derivations.

\paragraph{How to think about Chernoff bounds.}
Before getting into proofs, it is worth first reviewing the basic bounds that we will wish to prove. Let $X_1, X_2, \ldots, X_n$ be independent $0$-$1$ random variables satisfying $\Pr[X_i = 1] = p$ for some $p \le 1/2$. Let $X = \sum_{i = 1}^n X_i$ and let $\mu = \E[X] = np$. 

The Chernoff bound tells us that the probability of $X$ deviating substantially above its mean $\mu$ is small. That is, we get an upper bound on $\Pr[X \ge \mu + t]$ as a function of $t$. 

Chernoff bounds are presented in many different forms, and students often have trouble figuring out which version to memorize (the result is that many students end up using Wikipedia as a regular reference). So what is the right way to think about Chernoff bounds?

The first thing to know is that $n$ and $p$ are red herrings. The only parameter that actually matters is $\mu$. In fact, one can obtain tight Chernoff bounds in all regimes by just remembering two simple bounds. The first is the \defn{small-deviation bound}, which says that
\begin{equation}
\Pr[X \ge \mu + k\sqrt{\mu}] = \frac{1}{2^{\Theta(k^2)}}
\label{eq:small}
\end{equation}
for any $k = O(\sqrt{\mu})$ satisfying $k \sqrt{\mu} \le n$. The second is the \defn{large-deviation bound}, which says that
\begin{equation}
\Pr[X \ge \mu + r\mu] = \frac{1}{\Theta(r)^{\Theta(r \mu)}}
\label{eq:large}
\end{equation}
for any $r \ge 1$ satisfying $1 \le \mu + r\mu \le n$. 

There are three things to notice about these bounds. First, as I mentioned earlier, we are not worrying about what the constants are in the exponents. As theoreticians, we are almost always interested in asymptotic deviations (i.e., $\Pr[X \ge \mu + \Omega(t)]$ for various $t$), so the constant in the exponent typically doesn't matter. Second, the fact that the exponents are in $\Theta$-notation, rather than $\Omega$-notation, is no coincidence: both bounds turn out to be tight up to constant factors in the exponent. Third, the two bounds become equivalent when we consider $\Pr[X \ge \mu + \Theta(\mu)]$, so there is a smooth transition from one regime to the other.

It is also worth taking a few moments to internalize the shapes of these bounds. The small-deviation bound, should be viewed as telling us something about standard deviations. It turns out that the standard deviation of $X$ is guaranteed to be $\Theta(\sqrt{\mu})$, regardless of $n$ and $p$ (as long as $p \le 1/2$). Thus the bound says that the probability of being $k$ standard deviations above the mean shrinks at a rate of $1 / 2^{\Theta(k^2)}$. In fact, we will later see that the previous sentence continues to be true in much more general settings, and that this is the source of what is known as Bennett's inequality (we will come back to this later).

The large-deviation bound also takes an interesting shape. It is much stronger than most students would guess it should be. A priori, students typically assume that the bound should be something like $1 / 2^{\Theta(r \mu)}$. This is correct when $r = \Theta(1)$, but when $r$ is larger, \eqref{eq:large} gets stronger, replacing the denominator of $2$ with $r$. 

Although the above Chernoff bounds are stated in the case where $X_1, X_2, \ldots, X_n$ are identically distributed 0-1 random variables, the same upper bounds hold for any independent real-valued random variables $X_1, X_2, \ldots, X_n \in [0, 1]$ satisfying $\E[\sum_i X_i] = \mu$. The corresponding lower bounds do not necessarily hold in this more general setting (for example, it might be that $X_1, X_2, \ldots, X_n$ are all deterministically $1$, so $\Pr[X = \mu] = 1$), but we shall see that it is a relatively simple task to reason about when the lower bounds do or do not hold.

For readers that wish to apply \eqref{eq:small} and \eqref{eq:large} with explicit constants, we remark that a careful checking of the constants yields the following bounds. For any independent $X_1, X_2, \ldots, X_n \in [0, 1]$ satisfying $\E[\sum_i X_i] = \mu$, we have $$\Pr[X \ge \mu + k \sqrt{\mu}] \le 2^{-k^2/2}$$ for all $1 \le k \le \sqrt{\mu}$, and $$\Pr[X \ge r\mu] \le r^{-r\mu /4}$$ for all $r \ge 2$. 

\paragraph{Paper outline.} In the body of the paper, we will present the new Chernoff bound derivation from four different perspectives: 

\begin{itemize}
    \item \textbf{The One-Page Version (Section \ref{sec:onepage}). }We begin in Section \ref{sec:onepage} with a bare-bones version of the proof---a one-page self-contained analysis that focuses on the special case where we have $n$ fair coin flips. This version of the proof is designed for readers who like to read first and digest after. It does not concern itself with side-quests such as proving lower bounds or highlighting intuition. Additionally, to simplify the presentation, and because we are focusing only on \emph{fair} coin flips, we follow the convention in both this section and the next that each $X_i$ is in $\{-1, 1\}$ rather than $\{0, 1\}$.
    \item \textbf{The Extended Edition (Section \ref{sec:fair}). }In Section \ref{sec:fair}, we present the same proof again, but with additional commentary to motivate the steps and explain what's going on at a higher level. This version of the proof is designed for readers who like to digest as they read. It includes a focus on intuition, as well as a small side-quest to prove matching lower bounds. In fact, quite happily, the lower-bound proof serves as a strong motivator for why the \emph{upper-bound proof} should follow the structure that it does.
    \item \textbf{Bias Coin Flips and the Large-Deviation Regime (Section \ref{sec:biased}).} Section \ref{sec:biased} extends the proof to the setting of biased coin flips, where each coin has some probability $p \le 1/2$ of being heads. This allows us to present the large-deviation bound (Equation \ref{eq:large}). The proof follows a very similar structure to the small-deviation case, and comes once again with matching lower bounds.
    \item \textbf{Generalizing to an Adaptive Bennett's Inequality (Section \ref{sec:general}). } Having proven both the small and large deviation bounds for classical Chernoff bounds, we turn our attention in Section \ref{sec:general} to proving a powerful generalization of Chernoff bounds, namely, an adaptive version of Bennett's Inequality. Here, we are intentionally picking one of the most ``heavy-weight'' generalizations of Chernoff bounds. The point is to demonstrate how, with the same basic techniques that we used to proof the basic Chernoff bounds, and by just filling in a few more details, we can walk away with bounds that would traditionally be viewed as out of reach for combinatorial proofs.
\end{itemize}

We remark that the one-page proof in Section \ref{sec:onepage} is short but is not necessarily the right starting place for every reader. Some readers (especially students seeing Chernoff bounds for the first time) may wish to start with Sections \ref{sec:fair} and \ref{sec:biased}, and then to optionally add on additional sections from there. 

The ultimate goal of the paper is that, once you have digested the proof  (and with a bit of practice), you should be able
to confidently reason about Chernoff-style bounds in your head, extending them to other settings, and
convincing yourself that the bounds you’re obtaining are tight (up to constant factors in the exponent).

\paragraph{Historical context and past work.} 
Chernoff bounds first appeared in the literature in 1952 paper by Herman Chernoff \cite{chernoff1952measure} (although Chernoff himself attributes them to Herman Rubin \cite{Chernoff14}). The bounds and their generalizations have also been independently formulated by many other authors, including Kazuoki Azuma \cite{Azuma67}, Wassily Hoeffding \cite{Hoeffding94}, and Sergei Bernstein \cite{Bernstein1}. 

The classical proof of Chernoff bounds proceeds by applying Markov's inequality to the moment-generating function of a random variable. This is an important technique, that has also served as a core foundation for much of the work on concentration inequalities in statistics and probability theory \cite{Deviations1, Deviations2, Deviations3, Deviations4, Deviations5, Deviations6, Deviations7, Deviations8, Deviations9, Deviations10, Deviations11, Deviations12, Deviations13, Deviations14, McDiarmid89} (see \cite{DeviationsBook} or \cite{chung2006concentration} for a survey). There have also been several other proofs \cite{impagliazzo2010constructive, chvatal1979tail, steinke2017subgaussian, morin2017encoding}, using techniques from areas ranging from coding theory \cite{morin2017encoding} to differential privacy \cite{steinke2017subgaussian}; see Mulzer's survey \cite{mulzer2018five} for a description of the five main proof approaches that have been proposed. 

Most of these proof approaches \cite{mulzer2018five} struggle to generalize to more diverse settings---indeed, besides the classical moment-generating function approach, only one of the other approaches covered in \cite{mulzer2018five}, namely the proof of \cite{impagliazzo2010constructive}, appears to extend to prove Azuma's inequality (which, in turn, is weaker than the generalization that we prove in Section \ref{sec:general}). Additionally, all of the previous proofs \cite{mulzer2018five} share the unfortunate property that, in order to get to a user-friendly bound (i.e., to either of Equations \eqref{eq:small} or \eqref{eq:large}), one must first apply algebraic identities such as Taylor expansions. 

There is, not surprisingly, much less of a focus on lower bounds than there are on upper bounds. The classical moment-generating-function argument can be extended (non-trivially) to obtain essentially matching lower bounds, see, e.g., \cite{blog, wainwright2019high}. The simple combinatorial approach to proving lower bounds that is taken in the current paper does not appear to have been observed in past work, and the most general lower bound that we prove (Section \ref{sec:generallower}) does not appear to follow from the standard lower-bound techniques \cite{blog, wainwright2019high}.

\newpage

\section{Fair Coin Flips: The Bare-Bones Proof}\label{sec:onepage}

In this section, we consider a sum $X = \sum_{i = 1}^n X_i$ of independent unbiased coin flips $X_i \in \{1, -1\}$, and we prove that $\Pr\left[X \ge k \sqrt{n}\right] \le 2^{-\Omega(k^2)}$. Our starting point is a simple extension of Chebyshev's inequality:
\begin{lemma}[Extended Chebyshev]
For any $k \ge 1$, we have $\Pr[\max_j \sum_{i = 1}^j X_i \ge k \sqrt{n}] \le \frac{2}{k^2}$.
\label{lem:kolmogorov}
\end{lemma}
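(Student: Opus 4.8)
The statement to prove is the \textbf{Extended Chebyshev} lemma: for a sum of independent unbiased $\pm 1$ coin flips $X_i$, writing $S_j = \sum_{i=1}^j X_i$, we have $\Pr[\max_j S_j \ge k\sqrt{n}] \le 2/k^2$. This is a Kolmogorov-type maximal inequality, and the natural route is a reflection/stopping-time argument combined with the ordinary Chebyshev bound on $S_n$.

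So here is my plan. First I would recall plain Chebyshev: since $\E[S_n] = 0$ and $\var(S_n) = n$, we get $\Pr[S_n \ge k\sqrt{n}] \le \Pr[|S_n| \ge k\sqrt n] \le 1/k^2$, but this only controls the endpoint, not the running maximum. To promote this to the maximum, I would introduce the first time $\tau$ that the partial sum reaches level $k\sqrt n$, i.e., $\tau = \min\{j : S_j \ge k\sqrt n\}$ (with $\tau = \infty$ if this never happens). The event $\{\max_j S_j \ge k\sqrt n\}$ is exactly $\{\tau \le n\}$. Conditioned on $\tau = j$ for any fixed $j \le n$, the remaining increments $X_{j+1}, \ldots, X_n$ are independent of the history and symmetric, so $S_n - S_j$ is symmetric about $0$; hence $\Pr[S_n \ge S_j \mid \tau = j] \ge 1/2$, and since $S_j \ge k\sqrt n$ on this event, $\Pr[S_n \ge k\sqrt n \mid \tau = j] \ge 1/2$. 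Summing over $j$ gives $\Pr[S_n \ge k\sqrt n] \ge \tfrac12 \Pr[\tau \le n] = \tfrac12 \Pr[\max_j S_j \ge k\sqrt n]$. Combining with plain Chebyshev on $S_n$ yields $\Pr[\max_j S_j \ge k\sqrt n] \le 2\Pr[S_n \ge k\sqrt n] \le 2/k^2$.

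The one subtlety — and I expect this to be the only real obstacle — is the conditioning step: I need that conditioned on $\{\tau = j\}$, the future increments $X_{j+1},\dots,X_n$ remain i.i.d.\ uniform on $\{-1,1\}$. This holds because $\{\tau = j\}$ is determined entirely by $X_1,\dots,X_j$, so it is independent of the later variables; thus the conditional distribution of $S_n - S_j$ is a sum of independent $\pm 1$ variables, which is symmetric about $0$, giving $\Pr[S_n - S_j \ge 0 \mid \tau = j] \ge 1/2$. A careful writeup should note that $\{\tau = j\}$ decomposes as a disjoint union over the finitely many sign patterns of $(X_1,\dots,X_j)$ that first cross the threshold at step $j$, and on each such atom the future is unconditioned, so the symmetry argument applies atom-by-atom. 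One should also handle the trivial case $k \ge \sqrt n$ (where $k\sqrt n \ge n$ forces the probability to be $0$, or one simply notes $2/k^2$ may exceed $1$ and the bound is vacuous), and the possibility that $\tau \le n$ but $S_n < k\sqrt n$ is exactly what the factor of $2$ absorbs. No algebra beyond second moments and a symmetry observation is needed, which fits the paper's stated ethos.
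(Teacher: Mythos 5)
Your proof is correct and takes essentially the same approach as the paper: apply Chebyshev to $S_n$, then use a stopping-time/reflection argument (the first crossing time $\tau$, with the future increments symmetric and independent of $\{\tau = j\}$) to show $\Pr[\max_j S_j \ge k\sqrt{n}] \le 2\Pr[S_n \ge k\sqrt{n}]$. The paper states the same argument more tersely, but the decomposition, the use of symmetry, and the factor of $2$ are identical.
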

\begin{proof}
By Chebyshev's inequality, we have $\Pr\left[\sum_{i = 1}^n X_i \ge k \sqrt{n}\right] \le 1 / k^2$. Thus, it suffices to show that
\begin{equation}
\Pr\left[\max_j \sum_{i = 1}^j X_i \ge k \sqrt{n}\right] \le 2 \Pr\left[\sum_{i = 1}^n X_i \ge k \sqrt{n}\right].
\label{eq:chebyshev2factor}
\end{equation}
On the other hand, \eqref{eq:chebyshev2factor} follows from the following simple observation: If there exists $j \ge 0$ such that $\sum_{i = 1}^j X_i \ge k \sqrt{n}$, then with probability at least $0.5$ we have that $\sum_{i = j + 1}^n X_i \ge 0$, and thus that $\sum_{i = 1}^n X_i \ge k \sqrt{n}$.
\end{proof}
Using Lemma \ref{lem:kolmogorov}, we can derive a very simple (but already interesting) concentration bound:
\begin{lemma}[Poor Man's Chernoff Bound]
For $k \ge 1$,  $\Pr[X \geq k \sqrt{n}] = 2^{-\Omega(k)}.$
\label{lem:poorchernoff}
\end{lemma}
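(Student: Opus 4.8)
The plan is to bootstrap the Poor Man's bound from the Extended Chebyshev inequality (Lemma~\ref{lem:kolmogorov}) by a \emph{self-reduction / recursion on $k$}. The key observation is that $\Pr[X \ge k\sqrt{n}]$ can be controlled by breaking the ``journey'' of the random walk from $0$ to $k\sqrt n$ into stages, and at each stage we pay a constant-factor probability cost. Concretely, I would argue that if the walk $\sum_{i=1}^j X_i$ ever reaches height $k\sqrt n$, then it must first reach height $\frac{k}{2}\sqrt n$; and conditioned on the (stopping) time $j$ at which it first reaches $\frac{k}{2}\sqrt n$, the remaining walk $\sum_{i=j+1}^n X_i$ is an independent sum of $\le n$ fair coins, which must itself climb an additional $\frac{k}{2}\sqrt n$. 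Since $\frac{k}{2}\sqrt n = \frac{k}{2}\sqrt{n} \ge \frac{k}{2\sqrt 2}\cdot \sqrt{n-j}$ (the remaining walk has at most $n$ steps), Lemma~\ref{lem:kolmogorov} applied to the tail walk gives that this additional climb happens with probability at most $O(1/k^2)$ — a \emph{constant less than $1$} once $k$ is a large enough constant.

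The heart of the argument is then: letting $f(k) = \Pr[\max_j \sum_{i=1}^j X_i \ge k\sqrt n]$, the above reasoning yields a recursion of the shape $f(k) \le c/k^2 \cdot f(k/2)$ (or, more cleanly, one shows $f(k) \le \tfrac12 f(k/2)$ whenever $k \ge k_0$ for a suitable absolute constant $k_0$, using $c/k^2 \le 1/2$). Unrolling this $\Theta(\log k)$ times down to the base case $f(k_0) \le 1$ (a trivial bound, or $f(k_0) \le 2/k_0^2$ from Lemma~\ref{lem:kolmogorov}) gives $f(k) \le 2^{-\Omega(\log k)}$. Hmm — that only yields $1/\poly(k)$, not $2^{-\Omega(k)}$, so the halving recursion is too weak. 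Instead I would use an \emph{additive} decomposition: split the climb to $k\sqrt n$ into $\Theta(k)$ stages each of height $\Theta(\sqrt n)$, i.e., write the event $\{X \ge k\sqrt n\}$ as requiring the walk to successively gain $+\Theta(\sqrt n)$ some $\Omega(k)$ times, and each such gain, conditioned on the prefix, costs a factor of $O(1)$ (a constant strictly below $1$) by Lemma~\ref{lem:kolmogorov} applied to the remaining $\le n$ steps with the constant $k$-value. Multiplying $\Omega(k)$ independent-looking constant factors gives $2^{-\Omega(k)}$.

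More precisely, I would define stopping times $0 = \tau_0 \le \tau_1 \le \tau_2 \le \cdots$ where $\tau_\ell$ is the first time the partial sum reaches height $\ell \cdot c_0 \sqrt n$ for a suitable constant $c_0$. The event $\{X \ge k\sqrt n\}$ (in fact the stronger event that the running max reaches $k\sqrt n$) forces $\tau_m < \infty$ for $m = \lceil k/c_0 \rceil$. Conditioned on $\tau_{\ell}$ being some finite value $j$, the probability that the walk on steps $j{+}1, \ldots, n$ ever climbs by an additional $c_0\sqrt n$ is, by Lemma~\ref{lem:kolmogorov} (with $\sqrt n \ge \sqrt{n-j}$, so the ``$k$'' there is the constant $c_0$), at most $2/c_0^2$; choosing $c_0$ a large enough constant makes this at most, say, $1/2$. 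Chaining these conditional bounds over $\ell = 0, 1, \ldots, m-1$ gives $\Pr[X \ge k\sqrt n] \le (1/2)^m = 2^{-\Omega(k)}$, as desired.

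The main obstacle I anticipate is making the ``chaining of conditional probabilities over stopping times'' rigorous: one must be careful that after conditioning on $\tau_\ell = j$ (an event in the $\sigma$-algebra of the first $j$ steps), the future increments $X_{j+1}, \ldots, X_n$ remain i.i.d.\ fair coins, and then apply the extended Chebyshev bound to the \emph{shifted} walk $\sum_{i=j+1}^{j'} X_i$, whose length $n - j$ is at most $n$ (which is what lets us reuse the same $\sqrt n$ scale — here the monotonicity $n - j \le n$ is used in the safe direction, since a shorter walk is less likely to travel a fixed distance). A clean way to package this is as a lemma: $\Pr[\max_j \sum_{i=1}^j X_i \ge (a+b)\sqrt n] \le \Pr[\max_j \sum_{i=1}^j X_i \ge a\sqrt n]\cdot \max_{0\le m \le n}\Pr[\max_j \sum_{i=1}^{j} X_i' \ge b\sqrt n]$ where $X_i'$ ranges over a walk of length $m \le n$ — a submultiplicativity statement — and then iterate it with $a,b$ both constant multiples of $1$. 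Everything else is bookkeeping with constants.
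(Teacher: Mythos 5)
Your final argument---defining stopping times $\tau_\ell$ at heights $\ell c_0 \sqrt{n}$, conditioning on $\tau_\ell$, and applying Extended Chebyshev (Lemma~\ref{lem:kolmogorov}) to the tail walk of length $\le n$ to obtain a constant conditional factor per stage---is exactly the paper's argument (which uses checkpoint heights $j(2\sqrt{n}+1)$). One small bookkeeping point to tighten: because the walk may overshoot the level $\ell c_0\sqrt{n}$ by up to $1$ when first crossing it, the \emph{additional} climb needed to reach the next level is only $c_0\sqrt{n}-1$, not $c_0\sqrt{n}$, so you should either space your checkpoints $c_0\sqrt{n}+1$ apart (as the paper does via its ``$+1$'') or absorb the loss into the choice of $c_0$.
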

\begin{proof}
    For $j \ge 0$, let $t_j$ be the smallest index such that $\sum_{i = 1}^{t_j} X_i \ge j \cdot (2\sqrt{n} + 1)$, if such an index exists. For $j \ge 1$, if $t_j$ exists, then $\sum_{i = t_{j - 1} + 1}^{t_j} X_i \ge 2 \sqrt{n}$. So, if we condition on $t_{j - 1}$ existing, we can apply Lemma \ref{lem:kolmogorov} to $X_{t_{j - 1} + 1}, \ldots, X_n$ to get $\Pr[t_{j} \text{ exists} \mid t_{j - 1} \text{ exists}] \le \frac{1}{2}$. By induction on $j$, this implies $\Pr[t_j \text{ exists}] \le 2^{-\Omega(j)}$.
\end{proof}
In addition to the result above, we will need a Chernoff bound for sums of geometric random variables.
\begin{lemma}[Chernoff Bound for Geometric R.V.s]
Let $Y_1, Y_2, \ldots, Y_n$ be independent real-valued random variables and let $p \in (0, 1)$. If each $Y_i$ satisfies $\Pr[Y_i \ge j] \le p^j$ for all $j \in \mathbb{N}$, then $\Pr[\sum_i Y_i \ge 2n] \le (4p)^{n}.$
\label{lem:geo}
\end{lemma}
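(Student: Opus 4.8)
The plan is to reduce this real-valued tail bound to a purely combinatorial counting statement about integers, and then finish with a single union bound over compositions.

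\textbf{Truncation.} First I would replace each $Y_i$ by the nonnegative integer $b_i := \max(0, \lfloor Y_i \rfloor)$. The virtue of this choice is that it loses almost nothing. Checking the three cases $Y_i < 0$, $0 \le Y_i < 1$, and $Y_i \ge 1$ separately, one sees that $Y_i < b_i + 1$ always, so $\sum_i Y_i < \sum_i b_i + n$; hence the event $\sum_i Y_i \ge 2n$ is contained in the event $\sum_i b_i \ge n + 1$. On the other hand, for every integer $j \ge 1$ we have $b_i \ge j$ if and only if $Y_i \ge j$, so the hypothesis gives $\Pr[b_i \ge j] \le p^j$, while trivially $\Pr[b_i \ge 0] \le 1 = p^0$. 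The $b_i$ remain independent, so it suffices to prove $\Pr[\sum_i b_i \ge n+1] \le (4p)^n$ for independent nonnegative-integer-valued $b_i$ with $\Pr[b_i \ge j] \le p^j$ for all $j \ge 1$.

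\textbf{Union bound over compositions.} Next I would write the event $\{\sum_i b_i \ge n+1\}$ as the union, over all tuples $(k_1, \dots, k_n)$ of nonnegative integers with $\sum_i k_i = n+1$, of the events $\{b_i \ge k_i \text{ for all } i\}$ (given any outcome with $\sum_i b_i \ge n+1$, simply pick $k_i \le b_i$ greedily so that they sum to exactly $n+1$). For a fixed tuple, independence gives $\Pr[\,\forall i:\ b_i \ge k_i\,] = \prod_i \Pr[b_i \ge k_i] \le \prod_i p^{k_i} = p^{\,n+1}$, where the coordinates with $k_i = 0$ contribute the harmless factor $p^0 = 1$. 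The number of such tuples is the number of compositions of $n+1$ into $n$ nonnegative parts, which is $\binom{2n}{n-1} \le \binom{2n}{n} \le 4^n$. A union bound then yields $\Pr[\sum_i b_i \ge n+1] \le 4^n p^{\,n+1} \le (4p)^n$, using $p \le 1$. Combining with the truncation step gives $\Pr[\sum_i Y_i \ge 2n] \le (4p)^n$.

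\textbf{Where the difficulty lies.} There is no serious obstacle here; the only step that needs care is the truncation — confirming that passing from the real-valued (and possibly negative) $Y_i$ to the integer proxies $b_i$ costs only an additive $n$ in the deviation threshold while exactly preserving the tail bound at integer thresholds $j \ge 1$. Everything after that is the one-line union bound above, together with the elementary estimate $\binom{2n}{n-1} \le \sum_k \binom{2n}{k} = 4^n$.
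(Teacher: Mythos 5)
Your proof is correct and follows essentially the same approach as the paper's: truncate each $Y_i$ to the nonnegative integer $\max(0,\lfloor Y_i\rfloor)$, reduce the tail event to a union over witness tuples of nonnegative integers with a prescribed sum, bound each tuple's probability by independence, and bound the number of tuples by $4^n$. The only cosmetic differences are that you take the witnesses to sum to $n+1$ rather than $n$ and count them via the stars-and-bars coefficient $\binom{2n}{n-1}\le 4^n$ instead of the paper's binary-string encoding.
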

\begin{proof}
If $\sum Y_i \ge 2n$ then $\sum \lfloor Y_i \rfloor \ge n$. Thus there must exist $\vec{a} = (a_1, a_2, \ldots, a_n) \in (\mathbb{N} \cup \{0\})^n$ such that $\sum_i a_i = n$ and such that $\max(Y_i, 0) \ge a_i$ for each $i \in [n]$. Let $A$ denote the set of possible vectors $\vec{a}$. For a given $\vec{a} \in A$,
$$\Pr[\max(Y_i, 0) \ge a_i \text{ for all } i] \le \prod_i \Pr[\max(Y_i, 0) \ge a_i] \le \prod_i p^{a_i} = p^{\sum_i a_i} = p^{n}.$$
By a union bound, 
$\Pr[\sum_i Y_i  \ge 2n] \le \sum_{\vec{a} \in A} \Pr[\max(Y_i, 0) \ge a_i \text{ for all } i] \le |A| \cdot p^n.$
To complete the proof, it suffices to prove $|A| \le 4^n$. We can encode each $\vec{a} \in A$ as a binary string of $a_1$ zeros followed by a one, then $a_2$ zeros followed by a one, etc. As there are $\sum_i a_i = n$ zeros and $n$ ones, the string's length is $2n$, and $|A| \le 2^{2n} = 4^n$.
\end{proof}
Finally, combining the previous lemmas in the right way, we can extract the full bound:
\begin{theorem}[Chernoff Bound for Fair Coin Flips]
For $k \ge 1$, $\Pr[X \geq k \sqrt{n}] \le 2^{-\Omega(k^2)}$.
\label{thm:scratchpad}
\end{theorem}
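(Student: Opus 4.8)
The plan is to reduce Theorem~\ref{thm:scratchpad} to the geometric-random-variable bound of Lemma~\ref{lem:geo}, using the Poor Man's Chernoff Bound (Lemma~\ref{lem:poorchernoff}) as the source of the geometric tails. The intuition driving the reduction: chop the walk $X = \sum_{i=1}^n X_i$ into $m = \Theta(k^2)$ consecutive blocks, so each block has length about $n/m$ and hence a ``typical'' displacement of about $\sqrt{n/m} = \Theta(\sqrt n/k)$; to accumulate a total displacement of $k\sqrt n$, the blocks must on average each deviate by a constant number of their own standard deviations. Since a block's displacement has an exponential tail by Lemma~\ref{lem:poorchernoff}, the ``number of blocks'' needed to fuel such an average deviation behaves like a sum of geometric variables over $\Theta(m)$ blocks, and Lemma~\ref{lem:geo} turns this into a probability of $2^{-\Omega(m)} = 2^{-\Omega(k^2)}$.

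Concretely, I would first dispose of the trivial ranges: if $k > \sqrt n$ then $k\sqrt n > n \ge X$ and the probability is $0$, and if $k$ is below a suitable absolute constant then $2^{-\Omega(k^2)} = \Omega(1)$ and there is nothing to prove, so assume $1 \le k \le \sqrt n$ with $k$ large. Fix a large absolute constant $c$, set $m := \lfloor k^2/(Cc^2)\rfloor$ for a sufficiently large absolute constant $C$, partition $[n]$ into $m$ contiguous blocks each of size $\lfloor n/m\rfloor$ or $\lceil n/m\rceil$, write $\ell := \lceil n/m\rceil$ (an upper bound on every block length), let $D_b$ be the displacement of block $b$ so that $X = \sum_b D_b$, and define $Y_b := \lfloor \max(D_b,0)/(c\sqrt\ell)\rfloor$. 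Applying Lemma~\ref{lem:poorchernoff} inside block $b$ gives $\Pr[D_b \ge \lambda\sqrt\ell] \le 2^{-\Omega(\lambda)}$ for every $\lambda \ge 1$, hence $\Pr[Y_b \ge j] = \Pr[D_b \ge jc\sqrt\ell] \le 2^{-\Omega(jc)} \le (1/8)^j$ for all $j \ge 1$ once $c$ is large enough (the case $j=1$ included); the $Y_b$ are independent across blocks. So Lemma~\ref{lem:geo} applies with $p = 1/8$ and yields $\Pr[\sum_b Y_b \ge 2m] \le (1/2)^m = 2^{-\Omega(k^2)}$. Finally, $Y_b \ge \max(D_b,0)/(c\sqrt\ell) - 1 \ge D_b/(c\sqrt\ell) - 1$ gives $\sum_b Y_b \ge X/(c\sqrt\ell) - m$, and the choice of $m$ makes $3mc\sqrt\ell \le k\sqrt n$ (using $\ell \le 2n/m$), so on the event $X \ge k\sqrt n$ we get $\sum_b Y_b \ge 3m - m = 2m$. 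Thus $\{X \ge k\sqrt n\} \subseteq \{\sum_b Y_b \ge 2m\}$, and chaining the inclusions proves the theorem.

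The delicate point — and the only real obstacle — is that the two constants $c$ and $m/k^2$ pull against each other and must be chosen in the right order. The per-block threshold $c\sqrt\ell$ must be a \emph{large} constant multiple of $\sqrt\ell$, because that is the only way to force $\Pr[Y_b \ge 1] < 1/4$ and thereby satisfy the $j=1$ instance of Lemma~\ref{lem:geo}'s hypothesis; the Poor Man's bound says nothing useful at $\lambda = \Theta(1)$ unless $\lambda$ exceeds an unquantified constant. But a large threshold means each block ``absorbs'' up to $c\sqrt\ell$ of displacement without contributing to $\sum_b Y_b$, so to still guarantee $\sum_b Y_b \ge 2m$ when $X \ge k\sqrt n$ we must take the number of blocks $m$ to be a \emph{small} constant (depending on $c$) times $k^2$. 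Fixing $c$ first and then $m = \Theta(k^2)$ with hidden constant depending on $c$ makes everything consistent, and $m = \Theta(k^2)$ gives the exponent. This same tug-of-war — paying an unquantified constant factor at the block level — is precisely why the argument yields $2^{-\Omega(k^2)}$ rather than the sharp constant in the exponent.
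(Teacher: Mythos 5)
Your proposal is correct and follows essentially the same approach as the paper's proof: partition the flips into $\Theta(k^2)$ blocks, apply the Poor Man's Chernoff Bound (Lemma~\ref{lem:poorchernoff}) within each block to get per-block geometric tails, feed those into the geometric-sum bound (Lemma~\ref{lem:geo}), and translate back. The only cosmetic differences are that you explicitly track the tug-of-war between the per-block normalization constant $c$ and the block count $m$, and discretize via a floor; the paper instead uses exactly $k^2$ blocks, keeps the normalized variables real-valued, and absorbs the same constant adjustment into a final change of variables.
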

\begin{proof}
Break the coins into $k^2$ groups of size $n/k^2 \pm 1$ each, and define $Y_1, Y_2, \ldots, Y_{k^2}$ so that $Y_i$ is the sum of the $X_i$s in group $i$. By Lemma \ref{lem:poorchernoff}, we have
$$\Pr\left[Y_i \ge j \sqrt{n / k^2}\right] \le 2^{-\Omega(j)}.$$
Thus there exists a positive constant $c$ such that $Y'_i := Y_i / (c \sqrt{n / k^2}) = Y_i / (c \sqrt{n} / k)$ satisfies $\Pr[Y'_i \ge j] \le 8^{-j}.$ The $Y'_i$s are independent geometric random variables, so we can apply Lemma \ref{lem:geo} (with $p = 1/8)$ to get
$$\Pr\left[\sum_{i = 1}^{k^2} Y'_i \ge 2k^2\right] \le 2^{-\Omega(k^2)}.$$
Plugging in $\sum X_i = \Theta(\sqrt{n} / k) \cdot \sum Y'_i$ proves the theorem.
\end{proof}

\section{Fair Coin Flips: The Same Proof But With Commentary}\label{sec:fair}

We will now repeat the proof in the previous section, but this time with ample additional commentary. The goal is to add flavor and intuition to the proof. Along the way, we will also prove a matching lower bound, concluding that $\Pr[X \ge k \sqrt{n}]$ is not just $2^{-\Omega(k^2)}$, but is actually $2^{-\Theta(k^2)}$. To simplify the exposition in this section, we will often ignore rounding errors when discussing division and square roots---alternatively, so that these rounding errors do not exist, you can feel free to imagine that we are focusing only on values of $n$ that are powers of four and $k$ that are powers of $2$. 

As before, let $X_1, X_2, \ldots, X_n$ be independent random coin flips, where $X_i = 1$ represents heads and $X_i = -1$ represents tails. Each $X_i$ independently satisfies $\Pr[X_i = -1] = \Pr[X_i = 1] = 0.5$. Let $X = \sum_i X_i$ count the total number of heads minus the total number of tails. We want to prove the following:

\begin{restatable}{theorem}{thmcoin}\emph{(Chernoff Bound for Fair Coin Flips) \phantom{f}}
For $k \in \{1, \ldots, \sqrt{n}\}$, 
\begin{equation}
\Pr[X \geq k \sqrt{n}] = 2^{-\Theta(k^2)}.
\label{eq:coin-flip-Chernoff}
\end{equation}
\label{thm:coin-flip-Chernoff}
\end{restatable}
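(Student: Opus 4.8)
The statement bundles an upper bound, $\Pr[X \ge k\sqrt{n}] \le 2^{-\Omega(k^2)}$, together with a matching lower bound, $\Pr[X \ge k\sqrt{n}] \ge 2^{-O(k^2)}$. The upper bound is precisely Theorem~\ref{thm:scratchpad}, whose proof this section simply re-runs with commentary attached, so the only genuinely new ingredient is the lower bound. My plan is to prove it by a ``mirror image'' of the grouping argument behind the upper bound: I will show that $X$ reaches $k\sqrt{n}$ with probability $2^{-O(k^2)}$ essentially by accumulating $\Theta(k^2)$ independent ``one-standard-deviation'' windfalls --- which is, conversely, exactly why the upper bound also decomposes into $\Theta(k^2)$ blocks.

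Concretely, partition the $n$ coins into $g = Ck^2$ consecutive blocks (for an absolute constant $C$ to be fixed) of near-equal sizes $m_1, \dots, m_g = \Theta(n/k^2)$, and let $S^{(i)}$ denote the sum over block $i$. Since the blocks are independent,
\[
\Pr[X \ge k\sqrt{n}] \;\ge\; \Pr\!\Big[\textstyle\bigcap_{i=1}^{g}\big\{S^{(i)} \ge \tfrac{1}{2}\sqrt{m_i}\big\}\Big] \;=\; \prod_{i=1}^{g}\Pr\!\big[S^{(i)} \ge \tfrac{1}{2}\sqrt{m_i}\big],
\]
and on the event in the middle we have $X = \sum_i S^{(i)} \ge \tfrac12\sum_i\sqrt{m_i} \approx \tfrac12\sqrt{gn}$, which exceeds $k\sqrt{n}$ as soon as $C \ge 4$ (take $C$ a touch larger to absorb the block-size rounding). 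It therefore suffices to show each factor on the right is $\Omega(1)$; the product is then $\Omega(1)^{\Theta(k^2)} = 2^{-O(k^2)}$, as wanted.

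The crux is thus a single anti-concentration estimate --- the natural ``reverse'' of the Chebyshev bound that drives Lemma~\ref{lem:kolmogorov}: for a sum $S = \sum_{i=1}^{m} X_i$ of fair $\pm1$ coins, $\Pr[S \ge \tfrac12\sqrt{m}] = \Omega(1)$. I would obtain this from the Paley--Zygmund inequality $\Pr[Z \ge \theta\,\E Z] \ge (1-\theta)^2(\E Z)^2/\E[Z^2]$ applied with $Z = S^2$ and $\theta = \tfrac14$, which needs only two elementary moment counts: $\E[S^2] = m$, and $\E[S^4] = \sum_{a,b,c,d}\E[X_aX_bX_cX_d] = 3m^2 - 2m \le 3m^2$, since (by independence, $\E[X_a]=0$, $X_a^2 = 1$) a summand vanishes unless the four indices coincide or split into two equal pairs. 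Paley--Zygmund then yields $\Pr[S^2 \ge m/4] = \Omega(1)$, i.e.\ $\Pr[|S| \ge \tfrac12\sqrt{m}] = \Omega(1)$, and symmetry of $S$ about $0$ halves this to $\Pr[S \ge \tfrac12\sqrt{m}] = \Omega(1)$.

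Two loose ends close out the plan. When $k$ is within a constant factor of $\sqrt{n}$, the blocks above would have size below $1$; but in that regime $2^{-\Theta(k^2)} = 2^{-\Theta(n)}$, so the trivial estimate $\Pr[X \ge k\sqrt{n}] \ge \Pr[X = n] = 2^{-n}$ (valid since $k\sqrt{n} \le n$) already suffices. And the one step demanding actual care is the bookkeeping of absolute constants --- checking that the per-block thresholds $\tfrac12\sqrt{m_i}$ provably sum past $k\sqrt{n}$ once block sizes are rounded, and keeping the anti-concentration argument fully elementary --- but this needs no new idea. Putting the re-derived upper bound of Theorem~\ref{thm:scratchpad} together with the lower bound just described gives $\Pr[X \ge k\sqrt{n}] = 2^{-\Theta(k^2)}$.
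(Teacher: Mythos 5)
Your argument is correct, and its overall architecture mirrors the paper's: the upper bound is left to Theorem~\ref{thm:scratchpad}, and the lower bound is obtained by splitting the coins into $\Theta(k^2)$ blocks, forcing each block to deliver one standard deviation's worth of surplus with probability $\Omega(1)$, and multiplying. The paper's Proposition~\ref{prop:coinlower} does precisely this with $k^2$ blocks, invoking the anti-concentration fact~\eqref{eq:stddevlower} ($\Pr[X\ge\sqrt{n}/4]\ge 1/4$) for each block and then rescaling $k\to 4k$ at the end; you instead use $Ck^2$ blocks with $C$ chosen so the per-block thresholds already sum past $k\sqrt n$, which avoids the rescaling step but is the same calculation in different bookkeeping. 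Where you genuinely diverge is in how you establish the single anti-concentration estimate: you apply Paley--Zygmund to $Z=S^2$ with $\theta=\tfrac14$, which reduces it to the fourth-moment count $\E[S^4]=3m^2-2m\le 3m^2$, and then halve by symmetry. The paper instead proves~\eqref{eq:stddevlower} in Appendix~\ref{app:lowercoins} via a random-walk hitting-time argument: it shows $\E[t_r]=r^2$ for the first time the walk exits $(-r,r)$ (by a clean recursion on powers of two) and finishes with Markov's inequality. Both routes are elementary and correct; your Paley--Zygmund route is arguably shorter and more routine, while the paper's hitting-time argument is chosen to stay within the ``combinatorial, almost algebra-free'' theme it advertises. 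The only place to be careful in writing up your version is the block-size rounding you flag: when sizes $m_i$ differ, $\sum_i\sqrt{m_i}$ is not automatically close to $\sqrt{gn}$, but since consecutive blocks can be taken with sizes differing by at most one, each $m_i\ge\lfloor n/g\rfloor\ge n/(2g)$ when $n/g\ge 1$, so $\sum_i\sqrt{m_i}\ge\sqrt{gn/2}$ and bumping $C$ by a constant factor suffices, exactly as you say.
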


We will present the proof of Theorem \ref{thm:coin-flip-Chernoff} in four bite-sized pieces. The first three pieces can be viewed as warm-up results, each of which has a very simple (almost straightforward) proof. Then, in the final piece, we will show how to combine the warm-up results in order to get the full theorem.

The first warm-up establishes what we call the \emph{Poor Man's Chernoff Bound}. This bound gets the wrong dependence on $k$, but it will be \emph{incredibly simple to prove}. Moreover (and perhaps surprisingly) the bound will play an important role in the proof of the full theorem.

\begin{restatable}{proposition}{proppoorman}\emph{(Poor Man's Chernoff Bound) \phantom{f}}
For even $k \le \sqrt{n}$, 
\begin{equation}
\Pr[X \geq k \sqrt{n}] \le 2^{- k/2}.
\label{eq:poor-man-Chernoff}
\end{equation}
\label{prop:poorman}
\end{restatable}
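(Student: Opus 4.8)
The plan is to prove the bound by a first-passage (stopping-time) decomposition that splits the event $\{X \ge k\sqrt n\}$ into $k/2$ stages, each of which costs a factor of at most $1/2$. Write $S_t = \sum_{i=1}^t X_i$ for the partial sums (so $X = S_n$), and --- invoking the section's convention that $n$ is a power of four, so that $2\sqrt n$ and $k\sqrt n$ are integers --- define stopping times $\sigma_0 = 0$ and, for $j \ge 1$, let $\sigma_j$ be the least index $t \le n$ with $S_t \ge 2j\sqrt n$, if such an index exists. Because the walk moves in steps of $\pm 1$ and starts at $0$, whenever $\sigma_j$ exists we in fact have $S_{\sigma_j} = 2j\sqrt n$ exactly, with no overshoot; this exactness is what lets us push the per-stage constant all the way down to $1/2$.

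The core step is the claim that, for every $j \ge 1$, $\Pr[\sigma_j \text{ exists} \mid \sigma_{j-1} \text{ exists}] \le 1/2$. To see this, condition on $\sigma_{j-1} = t^\ast$; by the exactness just noted, $S_{t^\ast} = 2(j-1)\sqrt n$, so for $\sigma_j$ to exist the remaining coins must satisfy $\sum_{i = t^\ast + 1}^{t} X_i \ge 2\sqrt n$ for some $t \le n$. The variables $X_{t^\ast+1}, \ldots, X_n$ form a fresh batch of $m := n - t^\ast \le n$ independent fair $\pm 1$ coins, so Lemma \ref{lem:kolmogorov} applied to them with parameter $\kappa := 2\sqrt n / \sqrt m \ge 2$ gives $\Pr[\max_{t} \sum_{i=t^\ast+1}^{t} X_i \ge 2\sqrt n] = \Pr[\max_{t} \sum_{i=t^\ast+1}^{t} X_i \ge \kappa \sqrt m] \le 2/\kappa^2 = m/(2n) \le 1/2$, as claimed.

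Chaining these conditional bounds by induction on $j$ yields $\Pr[\sigma_j \text{ exists}] \le 2^{-j}$ for all $j \ge 0$. Finally, if $X = S_n \ge k\sqrt n = 2(k/2)\sqrt n$ (using that $k$ is even), then $S_n$ has in particular reached level $2(k/2)\sqrt n$, so $\sigma_{k/2}$ exists; hence $\Pr[X \ge k\sqrt n] \le \Pr[\sigma_{k/2} \text{ exists}] \le 2^{-k/2}$.

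The argument is almost entirely structural, so there is no hard estimate to grind through --- the one point to be careful about is the overshoot issue, i.e., ensuring $S_{\sigma_j}$ lands exactly on $2j\sqrt n$. This exactness is precisely what makes each stage cost $1/2$ rather than something marginally larger, and it is what the power-of-four convention buys us (absent that convention, one would place the thresholds at $j\lceil 2\sqrt n\rceil$, which keeps the Chebyshev step clean but forces a slightly worse constant because $\tfrac{k}{2}\lceil 2\sqrt n\rceil$ may exceed $k\sqrt n$). A secondary point is the conditioning: the post-$\sigma_{j-1}$ coins must be treated as a fresh independent walk (a strong-Markov-type observation), and the fact that there are at most $n$ of them is exactly what keeps $\kappa \ge 2$.
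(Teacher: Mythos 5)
Your proof is correct and takes essentially the same route as the paper: the first-passage decomposition at levels $2\sqrt n, 4\sqrt n, \ldots$, the observation that each stage costs a factor at most $1/2$ by a Chebyshev-plus-maximal-inequality argument over the fresh post-stopping-time coins, and chaining to get $2^{-k/2}$. The only cosmetic difference is that you invoke Lemma~\ref{lem:kolmogorov} as a black box (computing the parameter $\kappa = 2\sqrt n/\sqrt m \ge 2$), whereas the paper's Section~\ref{sec:fair} proof re-derives that maximal inequality inline from $\Pr[X \ge 2\sqrt n] \le 1/4$ and the reflection/symmetry observation.
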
 

The second warm-up establishes a very simple Chernoff bound for geometric random variables. This bound might seem like a niche special case, but we will see that it is actually a critical building block for getting tight Chernoff bounds (no matter what parameter regime you care about).
\begin{restatable}{proposition}{propgeom}\emph{(Sum of Geometric Random Variables) \phantom{f}}
Let $Y_1, Y_2, \ldots, Y_n$ be independent real-valued random variables and let $p \in (0, 1)$. Suppose each $Y_i$ satisfies for all non-negative integers $j$ that
\begin{equation}
    \Pr[Y_i \ge j] \le p^j.
    \label{eq:geom}
\end{equation}
Then the sum $Y = \sum_i Y_i$ satisfies
$$\Pr[Y \ge 2n] \le (4p)^{n}.$$
\label{prop:geometric}
\end{restatable}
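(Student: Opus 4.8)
The plan is to reduce the tail event $\{Y \ge 2n\}$ to a union of much simpler events indexed by integer vectors, and then apply a union bound together with independence — essentially the same strategy as in Lemma~\ref{lem:geo}. First I would note that nothing forces the $Y_i$ to be integer-valued or even non-negative, so I would immediately pass to $\max(Y_i,0)$: if $\sum_i Y_i \ge 2n$ then certainly $\sum_i \max(Y_i,0) \ge 2n$, and hence $\sum_i \lfloor \max(Y_i,0)\rfloor \ge \sum_i \max(Y_i,0) - n \ge n$. Since the left-hand side is now an integer that is at least $n$, I can choose non-negative integers $a_1,\ldots,a_n$ with $a_i \le \lfloor \max(Y_i,0)\rfloor$ and $\sum_i a_i = n$ exactly (just shave off the excess). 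Consequently the event $\{Y \ge 2n\}$ is contained in the union, over all vectors $\vec a = (a_1,\ldots,a_n)$ of non-negative integers with $\sum_i a_i = n$, of the events $E_{\vec a} := \{\max(Y_i,0) \ge a_i \text{ for all } i\}$. Let $A$ be the set of such vectors.

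Next I would bound $\Pr[E_{\vec a}]$ for a fixed $\vec a \in A$. By independence, $\Pr[E_{\vec a}] = \prod_i \Pr[\max(Y_i,0) \ge a_i]$. When $a_i \ge 1$, the hypothesis~\eqref{eq:geom} gives $\Pr[\max(Y_i,0) \ge a_i] = \Pr[Y_i \ge a_i] \le p^{a_i}$, and when $a_i = 0$ the factor is trivially at most $1 = p^0$; multiplying the factors yields $\Pr[E_{\vec a}] \le p^{\sum_i a_i} = p^n$, uniformly over $\vec a \in A$.

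Then I would bound $|A|$, the number of ordered ways to write $n$ as a sum of $n$ non-negative integers. The cleanest, algebra-free way (in keeping with the paper's theme) is to encode each $\vec a \in A$ as the binary string consisting of $a_1$ zeros, then a one, then $a_2$ zeros, then a one, and so on through $a_n$ zeros and a final one; this string has exactly $n$ zeros and $n$ ones, so it has length $2n$, and the encoding is injective, giving $|A| \le 2^{2n} = 4^n$. (Alternatively one could just say $|A| = \binom{2n-1}{n-1} \le 4^n$.) Combining everything, the union bound gives $\Pr[Y \ge 2n] \le \sum_{\vec a \in A} \Pr[E_{\vec a}] \le |A| \cdot p^n \le (4p)^n$, which is the claimed bound.

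I do not expect a serious obstacle here: the one place to be careful is the very first step, where the lack of integrality or non-negativity of the $Y_i$ forces the detour through $\max(Y_i,0)$ and the floor before integer vectors even make sense; after that, the argument is a routine union bound plus a stars-and-bars count, and both the independence step and the counting step are elementary.
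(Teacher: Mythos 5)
Your proof is correct and follows essentially the same route as the paper's: pass to $\max(Y_i,0)$, take floors to obtain an integer witness vector $\vec a$ with $\sum_i a_i = n$, bound each witness's probability by $p^n$ via independence and the geometric tail hypothesis, count witnesses with the binary-string (stars-and-bars) encoding to get $4^n$, and union bound. The only cosmetic difference is that you reach $\sum_i\lfloor\max(Y_i,0)\rfloor\ge n$ by first passing to $\max$ and then flooring, whereas the paper first floors and then truncates at zero; both are fine.
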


The third warm-up result establishes the lower-bound side of Theorem \ref{thm:coin-flip-Chernoff}:
\begin{restatable}{proposition}{propcoinlower}\emph{(Fair Coins Lower Bound) \phantom{f}}
For $k \le \sqrt{n}$, 
\begin{equation}
\Pr[X \ge k \sqrt{n}] \ge 2^{-O(k^2)}.
\label{eq:lowercoins}
\end{equation}
\label{prop:coinlower}
\end{restatable}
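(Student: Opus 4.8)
The plan is to prove the lower bound by a \emph{block decomposition}. Partition the $n$ coins into $2k^2$ independent groups of $m := n/(2k^2)$ coins each, let $B_i$ denote the sum of the coins in group $i$, and note that if every $B_i$ is at least $\sqrt{m/2} = \tfrac{\sqrt n}{2k}$, then $X = \sum_{i=1}^{2k^2} B_i \ge 2k^2 \cdot \tfrac{\sqrt n}{2k} = k\sqrt n$. The $B_i$ are independent, so it suffices to show that a single block of $m$ fair $\pm 1$ coins has sum at least $\sqrt{m/2}$ with probability bounded below by an absolute constant $c_0 > 0$; then $\Pr[X \ge k\sqrt n] \ge c_0^{\,2k^2} = 2^{-O(k^2)}$, which is exactly the claim. (We ignore rounding throughout, per the conventions of this section; and the degenerate regime $k = \Theta(\sqrt n)$, where $m < 1$, is immediate from $\Pr[X \ge k\sqrt n] \ge \Pr[X = n] = 2^{-n} = 2^{-O(k^2)}$.)

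The crux --- and the step I expect to be the real obstacle --- is this one-block \emph{anti-concentration} bound: a fair random walk of length $m$ lands at $\ge \sqrt{m/2}$ with constant probability. This is the one place where pure counting does not simply hand us the answer. My first choice would be to prove it by the second moment method: $B = B_i$ has mean $0$, variance $m$, and fourth moment $\E[B^4] = 3m^2 - 2m \le 3m^2$, so Paley--Zygmund applied to the nonnegative variable $B^2$ with $\theta = 1/2$ gives $\Pr[B^2 > m/2] \ge \tfrac14 \cdot \tfrac{(\E B^2)^2}{\E[B^4]} \ge \tfrac{1}{12}$, and the symmetry $B \overset{d}{=} -B$ then yields $\Pr[B \ge \sqrt{m/2}] \ge \tfrac{1}{24}$, so $c_0 = \tfrac{1}{24}$ works.

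If one prefers to avoid even this much algebra, the same one-block bound can be had combinatorially. Start from $\Pr[B \ge 0] \ge 1/2$. Either $\Pr[0 \le B < \sqrt m] \le 1/4$, in which case $\Pr[B \ge \sqrt m] \ge 1/4$ and we are done with room to spare; or else we compare $\Pr[B \in [\sqrt m, 2\sqrt m)]$ with $\Pr[B \in [0, \sqrt m)]$ point by point, using that shifting the argument of a binomial by $\sqrt m$ multiplies each point probability by a factor that stays within $1 - O(1/\sqrt m)$ throughout this range, so that the product of the $O(\sqrt m)$ relevant ratios is still $\Omega(1)$ (this is just the elementary estimate $(1-1/q)^q \ge 1/4$ for integers $q \ge 2$, of the same flavor as the counting bounds used earlier). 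Either way $\Pr[B \ge \sqrt m] = \Omega(1)$, which is even stronger than the block decomposition needs. Once the one-block bound is in place, the independence argument of the first paragraph finishes the proof.
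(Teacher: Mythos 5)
Your block decomposition is, up to a constant factor in the number of groups, the same as the paper's: the paper carves the $n$ coins into $k^2$ groups of size $S = n/k^2$, asks each group to contribute $\sqrt{S}/4$, and hence gets $\Pr[X \ge k\sqrt{n}/4] \ge (1/4)^{k^2}$ followed by a change of variables; you carve into $2k^2$ groups of size $m = n/(2k^2)$, ask each to contribute $\sqrt{m/2}$, and land directly on $\Pr[X \ge k\sqrt{n}] \ge c_0^{2k^2}$, avoiding the rescaling at the cost of slightly more groups. Both reductions are correct and differ only cosmetically. Where you genuinely diverge from the paper is in the one-block anti-concentration lemma, which you rightly identify as the crux. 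The paper proves $\Pr[B \ge \sqrt{m}/4] \ge 1/4$ via a random-walk hitting-time argument (Appendix: $\E[t_r] = r^2$ by a self-similar recursion, then Markov on $t_r$ plus a reflection step), keeping things essentially algebra-free. You instead apply Paley--Zygmund to $B^2$ using $\E[B^2]=m$ and $\E[B^4]=3m^2-2m \le 3m^2$, getting $\Pr[B^2 > m/2] \ge 1/12$ and then $\Pr[B \ge \sqrt{m/2}] \ge 1/24$ by symmetry. The computation checks out, and this is a perfectly clean and standard alternative; it does pull in a fourth-moment calculation, which cuts somewhat against the paper's stated goal of avoiding algebra, but that is a matter of taste rather than correctness. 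Your sketched ``combinatorial alternative'' (comparing binomial point masses across a shift of $\sqrt{m}$) is the right idea and would work, but as written it is hand-wavy --- the phrase about ratios ``staying within $1-O(1/\sqrt{m})$'' needs the explicit ratio $\Pr[B=j+2]/\Pr[B=j] = (m-j)/(m+j+2)$ and a product bound over $\Theta(\sqrt m)$ terms --- so if you wanted the proof to stand on its own you should either flesh that out or just commit to Paley--Zygmund, which you have already done correctly.
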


Each warm-up individually has a very simple combinatorial proof. On the other hand, once we have completed the warm-ups, the full proof of Theorem \ref{thm:coin-flip-Chernoff} will be \emph{almost immediate}. This final part, where we put the pieces together to get the full theorem, is my favorite part of the proof. 

% Having completed these three warm-ups, we will be ready to prove the full Chernoff bound. In fact, by combining the three warm-up results in the right way, we will get Theorem \ref{thm:coin-flip-Chernoff} as an \emph{immediate consequence}. This is my favorite part of the proof.

\subsection{The Poor Man's Chernoff Bound}

Our first warm-up is to prove Proposition \ref{prop:poorman}. 
\proppoorman*

We will make use of one basic fact:
\begin{equation}
\Pr[X \geq 2 \sqrt{n}] \le \frac{1}{4},
\label{eq:2stddev}
\end{equation}
which follows directly from Chebyshev's inequality. If you don't have Chebeyshev's inequality in cache, you can also feel free to take \eqref{eq:2stddev} as a black-box fact.

\begin{proof}[Proof of Poor Man's Chernoff Bound] 
Suppose we flip the $n$ coins one after another, so that $X_i$ gets revealed at time $i$. Say that we have achieved an \emph{upper deviation} of $R$ at time $t$ if $\sum_{i = 1}^t X_i = R$. We will be interested in the \emph{checkpoints} at which we first achieve upper deviations of $2 \sqrt{n}$, $4 \sqrt{n}$, $6 \sqrt{n}$, etc. That is, for $s = 1, 2, \ldots$, define the checkpoint $t_s$ to be the earliest point in time at which we have achieved an upper deviation of $2 s \sqrt{n}$. See Figure \ref{fig:poor_chernoff}.

\begin{figure}[h]
    \centering
    \includegraphics[scale = 0.5]{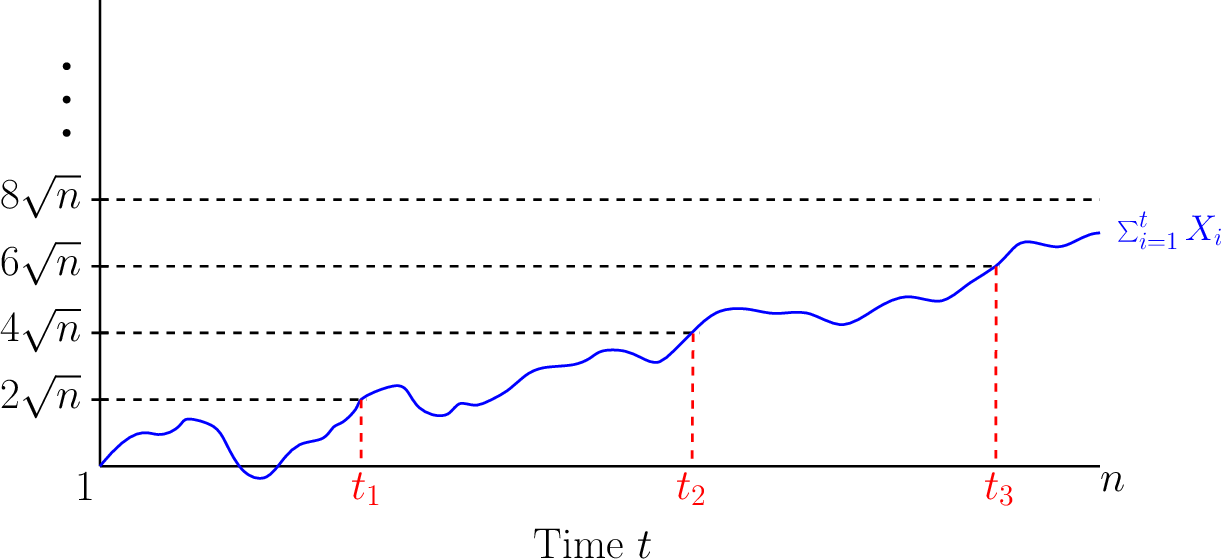}
    \caption{A graph of $\sum_{i = 1}^t X_i$ over time $t$, with labels for the times $t_1, t_2, t_3$ at which we first achieve upper deviations $2\sqrt{n}$, $4\sqrt{n}$, and $6\sqrt{n}$, respectively. The time $t_4$ does not exist in this example, because an upper deviation of $8\sqrt{n}$ is never achieved.}
    \label{fig:poor_chernoff}
\end{figure}

Notice that, \emph{a priori}, the checkpoint $t_1$ may not exist. (We may flip all $n$ coins and never get an upper deviation of $2 \sqrt{n}$). Even if $t_1$ exists, $t_2$ may not exist. And even if $t_2$ exists, $t_3$ may not, etc. Of course, if $X \ge k \sqrt{n}$ then the checkpoint $t_{k/2}$ \emph{must exist} (although the converse is not true). Thus, in order to bound $\Pr[X \geq k \sqrt{n}]$, we can instead bound $\Pr[t_{k / 2} \text{ exists}]$. 

Let's begin by proving that $\Pr[t_1 \text{ exists}] \le 1/2$. Observe that 
$$\Pr[X \ge 2 \sqrt{n}] = \Pr[t_1 \text{ exists}] \cdot \Pr[X_{t_1 + 1} + \cdots + X_n \ge 0 \mid t_1 \text{ exists}].$$
The probability on the left side is at most $1/4$ by \eqref{eq:2stddev}, and the second probability on the right side is at least $1/2$ by symmetry between heads/tails. Thus $1/4 \ge \Pr[t_1 \text{ exists}] \cdot 1/2$, implying that $\Pr[t_1 \text{ exists}] \le 1/2$. 

Next we argue that $\Pr[t_i \text{ exists} \mid t_1, \ldots, t_{i - 1} \text{ exist}] \le 1/2$ for any $i > 1$. Indeed, $t_i$ occurs only if, starting at time $t_{i - 1} + 1$, there is some point in time during the remaining $n - t_{i - 1} \le n$ coin flips at which we have again achieved an (additional) upper deviation of $2 \sqrt{n}$. However, we already know from our analysis of $\Pr[t_1 \text{ exists}]$ that any sequence of $\le n$ coin flips has probability at most $1/2$ of ever achieving upper deviation at least $2 \sqrt{n}$. Thus $\Pr[t_i \text{ exists} \mid t_1, \ldots, t_{i - 1} \text{ exist}] \le 1/2$. 

Putting the pieces together,
$$\Pr[X \geq k \sqrt{n}] \le \Pr[t_{k/2} \text{ exists}] \le \prod_{i = 1}^{k / 2} \Pr[t_i \text{ exists} \mid t_1, \ldots, t_{i - 1} \text{ exist}] \le \frac{1}{2^{k/2}}.$$
\end{proof}

It's worth taking a moment to understand the moral of the Poor Man's Chernoff bound. What the bound is really saying is that if we consider thresholds $0$, $2\sqrt{n}$, $4\sqrt{n}$, $6\sqrt{n}$, $\ldots$ for $X$, the marginal probability of getting from the $i$-th threshold to the $(i + 1)$-st is decreasing as a function of $i$. That is, the first upper deviation of $2\sqrt{n}$ is the easiest (occurring with probability roughly $1/2$). The next $2\sqrt{n}$ is the next easiest, and so on. This is an almost trivial fact (since each subsequent deviation has fewer coin flips to make use of than the previous ones), and, as we will see later on, it is also a fact that holds in many settings (not just coin flips). But even this simple fact is enough to get a nontrivial bound.

\subsection{A Simple Chernoff bound for Sums of Geometric Random Variables}

Our next warm-up is to prove Proposition \ref{prop:geometric}.

\propgeom*
\begin{proof}
    If $Y \ge 2n$, then $Y' = \sum_i \lfloor Y_i \rfloor$ must be at least $n$. Thus, there exists a tuple of non-negative integers $\langle q_1, q_2, \ldots, q_n\rangle$ such that $\sum_i q_i = n$ and such that $\max(Y_i, 0) \ge q_i$ for each $i$. Call such a tuple a \defn{witness sequence}. 
    
    We will complete the proof in two pieces. First, we bound the number of possible witness sequences by $4^n$. Next, we bound the probability of a given witness sequence occurring by $p^n$. Combining these facts, we have by a union bound that the probability of any witness sequence occurring is at most $4^n p^n \le (4p)^n$. 

    To bound the number of possible witness sequences, observe that each witness sequence  $\langle q_1, q_2, \ldots, q_n\rangle$ can be viewed as a way to throw $n$ balls into $n$ bins (i.e., place $q_i$ balls into each bin $i$). There is a classic trick for bounding the number of ways to do this: encode the witness sequence as a binary string with $n$ zeros and $n$ ones, where the string consists of $q_1$ ones, followed by a zero, then $q_2$ ones, followed by a zero, then $q_3$ ones, followed by a zero, and so on. This creates an injection from witness sequences to binary strings of length $2n$. Since there are trivially at most $2^{2n} = 4^n$ such binary strings, it follows that there are also at most $4^n$ possible witness sequences.

    To bound the probability of a given witness sequence occurring, we can simply apply \eqref{eq:geom}. This tells us that each $Y_i$ has probability at most $p^{q_i}$ of satisfying $Y_i \ge q_i$. As the $Y_i$s are independent, it follows that
    \begin{equation}\Pr[Y_i \ge q_i \text{ for all } i] \le \prod_{i = 1}^n \Pr[Y_i \ge q_i] \le \prod_{i = 1}^n p^{q_i} = p^n,
    \label{eq:Yichain}
    \end{equation}
    where the final equality makes use of the fact that $\sum_{i = 1}^n q_i = n$. This completes the proof.
\end{proof}

Note that, if the $Y_i$s are guaranteed to be integers, then the preceding argument gives us a slightly stronger bound (since we can use $Y$ in place of $Y'$). 
\begin{corollary}
    If the $Y_i$s are guaranteed to be integers, then 
    $$\Pr[Y \ge n] \le (4p)^{n}.$$
    \label{cor:geometric}
\end{corollary}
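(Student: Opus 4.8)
The statement to prove is Corollary \ref{cor:geometric}: if the $Y_i$s are guaranteed to be integers, and each satisfies $\Pr[Y_i \ge j] \le p^j$ for all non-negative integers $j$, then $\Pr[Y \ge n] \le (4p)^n$.

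The plan is to essentially rerun the proof of Proposition \ref{prop:geometric} verbatim, but replacing the floor-function step with the observation that the $Y_i$ are already integers. In the original proof, the only place the factor of $2$ in ``$Y \ge 2n$'' is used is to guarantee that $Y' = \sum_i \lfloor Y_i \rfloor \ge n$ — we needed $Y \ge 2n$ to survive the rounding-down of up to $n$ terms, each losing less than $1$. If the $Y_i$ are integers, then $\lfloor Y_i \rfloor = Y_i$, so $Y \ge n$ directly gives $\sum_i Y_i \ge n$, and no slack is needed.

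Concretely, I would argue: suppose $Y = \sum_i Y_i \ge n$. Since each $Y_i$ is an integer, there exists a tuple of non-negative integers $\langle q_1, \ldots, q_n \rangle$ with $\sum_i q_i = n$ and $\max(Y_i, 0) \ge q_i$ for each $i$ (take $q_i = \max(Y_i, 0)$ truncated down so the sum is exactly $n$, which is possible since $\sum_i \max(Y_i,0) \ge \sum_i Y_i \ge n$). Call such a tuple a witness sequence, exactly as before. The counting argument is identical: each witness sequence injects into a binary string of length $2n$ via the $q_1$ ones / zero / $q_2$ ones / zero / $\cdots$ encoding, so there are at most $4^n$ of them. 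The probability bound is identical: for a fixed witness sequence, $\Pr[Y_i \ge q_i \text{ for all } i] \le \prod_i p^{q_i} = p^n$ by independence and \eqref{eq:geom}. A union bound over at most $4^n$ witness sequences then gives $\Pr[Y \ge n] \le 4^n p^n = (4p)^n$.

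There is essentially no obstacle here — this is a routine strengthening of the proposition, and the note immediately preceding the corollary in the text already flags exactly why it works (``we can use $Y$ in place of $Y'$''). The only point requiring a moment's care is confirming that a valid witness sequence (with the $q_i$ summing to \emph{exactly} $n$, not just at least $n$) exists: this holds because $\sum_i \max(Y_i, 0) \ge \sum_i Y_i \ge n$, so one can decrease the $q_i$ from $\max(Y_i, 0)$ down until the sum hits $n$, and the event $\{Y_i \ge q_i \text{ for all } i\}$ only becomes more likely as the $q_i$ shrink — so the union bound still covers the event $\{Y \ge n\}$.
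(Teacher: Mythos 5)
Your proof is correct and matches the paper's approach exactly: the paper notes that when the $Y_i$ are integers one can use $Y$ in place of $Y' = \sum_i \lfloor Y_i \rfloor$, and then reruns the witness-sequence argument with threshold $n$ instead of $2n$, which is precisely what you do. The extra care you take in verifying that a witness tuple with $\sum_i q_i = n$ exactly still exists is a reasonable (and correct) clarification, but it is the same argument.
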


\subsection{The Lower Bound}

Our third warm-up is to prove the lower-bound side of our Chernoff bound. Although we don't typically prove the lower-bound side when we teach Chernoff bounds, we will see that its proof is \emph{remarkably} simple. 

\propcoinlower*

To prove \eqref{eq:lowercoins}, we will make use of another basic fact: 
\begin{equation}
\Pr[X \ge \sqrt{n}/4] \ge 1/4.
\label{eq:stddevlower}
\end{equation}
To streamline our exposition, we will take \eqref{eq:stddevlower} for granted. For completeness, however, we also include a simple combinatorial proof in Appendix \ref{app:lowercoins}.

% \footnote{In fact, \eqref{eq:stddevlower} is actually a direct consequence of two other facts, namely the fact that $\E[X^2] = n$ and \emph{the Poor Man's Chernoff Bound}. In particular, the Poor Man's Chernoff Bound tells us that outliers are rare, so almost none of $\E[X^2]$ is coming from cases where $X^2 \ge 100n$. From this, one can deduce that $\E[\min(|X|, 10\sqrt{n})^2] \ge 0.9 n$, which one can then show implies \eqref{eq:stddevlower}.}

\begin{proof}[Proof of Proposition \ref{prop:coinlower}]
Partition the coins into $k^2$ groups each of size $S = n / k^2$. Define $E_i$ to be the event that group $i$ achieves sum of at least $\sqrt{S}/4 = \sqrt{n / k^2}/4 = \sqrt{n} / (4k)$. Notice that, if \emph{all} of events $E_1, E_2, \ldots, E_{k^2}$ were to occur, then the total sum would be at least $k^2 \cdot \sqrt{n} / (4k) \ge k\sqrt{n}/4$. See Figure \ref{fig:lower}.

\begin{figure}[h]
    \centering
    \includegraphics[scale = 0.5]{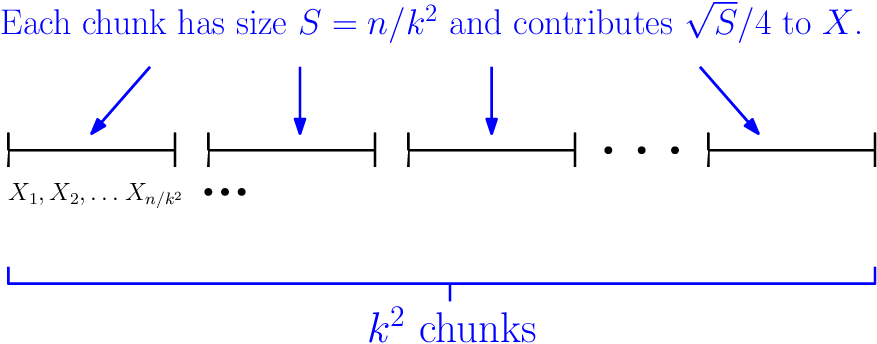}
    \caption{The lower-bound construction partitions the coins into $k^2$ groups, and considers the event that every group contributes $\Omega(\sqrt{S})$ to $X$, where $S$ is the size of each group. This would imply that $X \ge k^2 \cdot \Omega(\sqrt{S}) = \Omega(k^2 \cdot \sqrt{n / k^2}) = \Omega(k\sqrt{n})$.}
    \label{fig:lower}
\end{figure}

Applying \eqref{eq:stddevlower} to each group $i$ (with $n = |S|$), we see that each $E_i$ occurs with probability at least $1/4$. The probability that all of $E_1, E_2, \ldots, E_{k^2}$ occur is therefore at least $1/4^{k^2}$. Thus we have that
\begin{equation} \Pr[X \geq k \sqrt{n}/4] \ge 1/4^{k^2}.
\label{eq:0.1}
\end{equation}

This is not quite what we set out to prove, since we want $\Pr[X \ge k \sqrt{n}]$. Notice, however, that by a simple change of variables, \eqref{eq:0.1} implies that $\Pr[X \ge k \sqrt{n}] \ge (1/4)^{16 k^2}$ for all $k \le \sqrt{n}/4$. And the case of $k \ge \sqrt{n}/4$ follows from the simple fact that, for $k \in [\sqrt{n}/4, \sqrt{n}]$, we have $\Pr[X \ge k\sqrt{n}] \ge \Pr[X = n] = 2^{-n} = 2^{-O(k^2)}$. 
\end{proof}

It is important to understand why we broke the coins into $k^2$ groups, rather than some other number: $k^2$ is the magic number such that, if each group misbehaves just \emph{a little} (i.e., incurs a sum of at least $0.1\sqrt{S}$, which occurs with probability at least $0.1$), then the cumulative effect is a sum of $\Omega(k\sqrt{n})$. 

\subsection{The Upper Bound}

We are now prepared to prove the full Chernoff bound, restated below.

\thmcoin*

\begin{proof}
As we have already proven the lower bound (Proposition \ref{prop:coinlower}), we can focus here on the upper bound. We will show that
\begin{equation}
\Pr[X \geq 16 k\sqrt{n}] \le \frac{1}{4^{k^2}}.
\label{eq:upperbound}
\end{equation}

Our proof will build on each of the three warm-ups from earlier. We will use essentially the same group structure as in the lower bound, and we will analyze the groups by \emph{directly applying} Propositions \ref{prop:poorman} and \ref{prop:geometric}.

Break the coins into $k^2$ groups, each of size $S = n / k^2$. Define $C_1, C_2, \ldots, C_{k^2}$ to be the sums of the coin flips in each group. One way to think about the event $X \ge 16k \sqrt{n}$ is that the $C_i$s are, on average, at least $16 k \sqrt{n} / k^2 = 16 \sqrt{S}$. We know that for each group, having a sum of $16\sqrt{S}$ isn't very likely---in fact, by the Poor Man's Chernoff Bound (Proposition \ref{prop:poorman}), we know that each $C_i$ satisfies the bound
$$\Pr[C_i \ge 8 t\sqrt{S}] \le 2^{-4t}$$
for any positive integer $t$. If we define $\overline{C}_i = C_i / (8 \sqrt{S})$, then the Poor Man's bound translates to
$$\Pr[\overline{C}_i \ge t] \le 1/16^t.$$
In other words, $\overline{C}_i$ is bounded above by a geometric random variable.

We are interested in the event that
$$\sum_{i = 1}^{k^2} C_i \ge 16 k\sqrt{n}.$$
As noted above, this is equivalent to the event that, on average, each $C_i$ is at least $16\sqrt{S}$. Rewriting this in terms of the $\overline{C}_i$s, the event that we care about is
$$\sum_{i = 1}^{k^2} \overline{C}_i \ge 2k^2.$$
Since the $\overline{C}_i$s are independent geometric random variables, we can apply Proposition \ref{prop:geometric} to bound the probability of the above event by
$$(4/16)^{k^2} = 4^{-k^2},$$
which completes the proof.
\end{proof}

\section{Biased Coin Flips: The Large-Deviations Case}\label{sec:biased}

Next we extend our Chernoff bound to handle biased coin flips. Let $p \le 1/2$ be a probability. Suppose that each of $X_1, X_2, \ldots, X_n$ is $0$ with probability $1 - p$ and $1$ with probability $p$. As before, assume that the $X_i$s are independent, and set $X = \sum_i X_i$. Notice that we have swapped from each $X_i$ being in $\{-1, 1\}$ to each $X_i$ being in $\{0, 1\}$. This perspective, it turns out, will significantly simplify the exposition when we present the analysis for the large-deviation regime. 

Let $\mu = \E[X] = pn$. As discussed in the introduction, the Chernoff bound for $X$ splits into two parameter regimes. The \emph{small-deviation} regime is governed by a bound that looks very similar to what we had for fair coin flips: for $k \in \{1, 2, \ldots, \sqrt{\mu}\}$,
\begin{equation}
    \Pr[X \geq \mu + k \sqrt{\mu}] = 2^{-\Theta(k^2)}.
    \label{eq:smallskip}
\end{equation}
The \emph{large-deviation} regime is governed by a bound that looks a little different: for any $r \ge 2$ satisfying $1 \le r \mu \le n$, we have
\begin{equation}
    \Pr[X \geq r \mu] = 1 / \Theta(r)^{\Theta(r \mu)}. \label{eq:verylargedev}
\end{equation}
Note that \eqref{eq:verylargedev} takes a slightly different form than the version of the bound that we presented in the introduction, examining $\Pr[X \ge r \mu]$ rather than $\Pr[X \ge \mu + r \mu]$ -- this distinction, although only aesthetic (it changes the value of $r$ by $1$), will make our analysis a bit cleaner. 

The small-deviation case follows from almost exactly the same arguments as in the previous section, so we will skip its proof for now. (But, for completeness, it is worth noting that Theorems \ref{thm:general} and \ref{thm:generallower} in Section \ref{sec:general} directly imply \eqref{eq:smallskip}.) Instead, this section will focus on the large-deviation regime. What's neat is that the proof of \eqref{eq:verylargedev} will follow almost exactly the same structure as the proof that we have already seen.

\begin{proposition}
  Let $0 \le p \le 1/2$. Let $X_1, \ldots, X_n$ be i.i.d. $0$-$1$ random variables satisfying $\Pr[X_i = 1] = p$ and $\Pr[X_i = 0] = 1 - p$. Let $X = \sum_i X_i$ and let $\mu = \E[X] = pn$. For any $r \ge 2$ satisfying $1 \le r \mu \le n$, we have
$$\Pr[X \geq r \mu] = 1 / \Theta(r)^{\Theta(r \mu)}.$$
    \label{prop:largedeviation}
\end{proposition}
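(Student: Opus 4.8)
The plan is to prove the two bounds separately, in each case running essentially the group-based argument of Section~\ref{sec:fair} but with ``$1/p$'' playing the role that ``$\sqrt n$'' played there. Throughout I will be cavalier about rounding (the quantities $1/p$, $\mu$, and $(r-1)\mu$ need not be integers) and about the degenerate case $\mu<1$; as in the earlier sections these are absorbed by $\pm 1$ slop and by the $\Theta$'s in the exponent. I would also immediately dispose of the regime $2\le r\le r_0$ for a suitable absolute constant $r_0$: there $\Pr[X\ge r\mu]\le\Pr[X\ge 2\mu]$, and the small-deviation bound~\eqref{eq:smallskip} (applied with $k=\sqrt\mu$) gives $\Pr[X\ge 2\mu]=2^{-\Theta(\mu)}$, which is exactly $1/\Theta(r)^{\Theta(r\mu)}$ when $r=\Theta(1)$. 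So the real work is the upper bound for $r\ge r_0$.

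For that, partition the $n$ coins into $\mu$ groups of $1/p$ coins each, so that each group's sum $C_i$ has expected value exactly $1$; the $C_i$ are i.i.d. The trivial union bound over which coins in a group come up heads gives, for every $j$, $\Pr[C_i\ge j]\le\binom{1/p}{j}p^j\le 1/j!$ -- each group's sum is thus dominated not merely by a geometric random variable but by a faster-decaying one, and this turns out to be essential. If $X\ge r\mu$ then $\sum_i\lfloor C_i\rfloor\ge(r-1)\mu$, so there is a \emph{witness sequence} $\langle q_1,\dots,q_\mu\rangle$ of non-negative integers with $\sum_i q_i=N:=\lceil(r-1)\mu\rceil$ and $C_i\ge q_i$ for all $i$. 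Exactly as in the proof of Proposition~\ref{prop:geometric}, the number of witness sequences is at most $2^{N+\mu}\le 2^{r\mu+1}$, and any fixed one occurs with probability $\prod_i\Pr[C_i\ge q_i]\le 1/\prod_i q_i!$. A union bound then gives $\Pr[X\ge r\mu]\le 2^{r\mu+1}\big/\min_{\vec q}\prod_i q_i!$.

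The step I expect to be the main obstacle is the one that extracts the $\Theta(r)$ base -- equivalently, the ``extra'' $\log r$ in the exponent that makes the large-deviation bound stronger than a plain $2^{-\Theta(r\mu)}$, and which does not appear if one only treats each $C_i$ as a geometric random variable: one must lower-bound $\prod_i q_i!$ over all witness sequences. Since $j\mapsto\ln(j!)$ is convex, Jensen gives $\sum_i\ln(q_i!)\ge\mu\ln\big((N/\mu)!\big)\ge N\ln\!\big(N/(e\mu)\big)\ge(r-1)\mu\cdot\ln\!\big((r-1)/e\big)$, i.e. $\prod_i q_i!\ge\big((r-1)/e\big)^{(r-1)\mu}$. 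Substituting, $\Pr[X\ge r\mu]\le 2^{r\mu+1}\big(e/(r-1)\big)^{(r-1)\mu}\le 2\big(4e/(r-1)\big)^{(r-1)\mu}$, which for $r\ge r_0$ is of the form $1/\Theta(r)^{\Theta(r\mu)}$, as desired. (The same conclusion can be reached more quickly via $\Pr[X\ge r\mu]\le\binom{n}{\lceil r\mu\rceil}p^{\lceil r\mu\rceil}$ and $\binom nm\le(en/m)^m$; I prefer the witness-sequence route because it uses no inequality beyond $\binom nm\le n^m/m!$.)

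Finally, for the lower bound I would mirror Proposition~\ref{prop:coinlower}: partition the $n$ coins into $\approx r\mu$ groups of $\approx 1/(rp)$ coins each, and let $E_i$ be the event that group $i$ contains at least one head. The $E_i$ are independent and $\Pr[E_i]=1-(1-p)^{1/(rp)}\ge 1-e^{-1/r}=\Theta(1/r)$, so $\Pr[\text{all }E_i]\ge\Theta(1/r)^{r\mu}=1/\Theta(r)^{\Theta(r\mu)}$; and whenever every $E_i$ occurs, $X\ge r\mu$. A change of variables $r\mapsto\Theta(r)$, exactly as at the end of the proof of Proposition~\ref{prop:coinlower} -- with the narrow range where $r\mu=\Theta(n)$ handled by the trivial bound $\Pr[X\ge r\mu]\ge\Pr[X=n]=p^n$ -- upgrades this to $\Pr[X\ge r\mu]\ge 1/\Theta(r)^{\Theta(r\mu)}$ for all valid $r$, completing the proof.
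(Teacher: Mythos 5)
Your lower bound is essentially identical to the paper's: partition into $\approx r\mu$ groups so each group has expected sum $1/r$, then the event that every group contains at least one head has probability $\Theta(1/r)^{r\mu}$ and forces $X \ge r\mu$. Your upper bound, however, takes a genuinely different route. The paper partitions into $r\mu$ groups (each of cumulative expectation $1/r$), observes via Markov's inequality that each group's sum is stochastically dominated by a geometric with parameter $1/r$, and then invokes Corollary~\ref{cor:geometric} directly to get $(4/r)^{r\mu}$---the base $\Theta(r)$ arrives for free because the geometric parameter is already $1/r$. You instead partition into $\mu$ groups (each of cumulative expectation $1$), so a bare geometric bound on each $C_i$ would only give $2^{-\Theta(r\mu)}$; to recover the $\log r$ you exploit the faster-than-geometric decay $\Pr[C_i \ge j] \le 1/j!$, re-run the witness-sequence argument by hand, and apply Jensen to the convex function $\ln(j!)$ to lower-bound $\prod_i q_i!$ by $((r-1)/e)^{(r-1)\mu}$. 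Both routes are correct. The paper's is aesthetically cleaner within its own framework because it reuses the geometric-sum lemma as a black box and needs no Stirling-type estimate; yours is arguably more self-contained as a derivation of the Poisson-like tail, and it makes explicit exactly where the $\log r$ improvement comes from (the $1/j!$ decay rather than $c^j$), which is a nice pedagogical point. The tradeoff is that your path requires the Jensen/Stirling step, which reintroduces a bit of the algebra the paper is trying to avoid; the paper sidesteps this by choosing the group size so that the geometric parameter, rather than the tail shape, carries the $r$-dependence.
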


To simplify our discussion, we will assume in our proof of Proposition \ref{prop:largedeviation} that $n$ is divisible by $r \mu$. This is just to avoid some minor handling of rounding errors, and with a bit of casework one can actually show that this simplification is without loss of generality.

\begin{proof}
We begin with the lower bound. Break the coins into $r \mu$ groups. The coins in each group have cumulative expectation $p \cdot \frac{n}{r \mu} = p \cdot \frac{n}{r pn} = \frac{1}{r}$. With a bit of work, one can obtain the following basic fact: the probability of at least one coin in the group evaluating to $1$ is at least $\Omega(1/r)$.\footnote{Indeed, the probability that exactly one coin evaluates to $1$ is $\binom{n/(r\mu)}{1} p (1 - p)^{n / (r\mu) - 1}$. Since $n / (r\mu) = p^{-1} r^{-1}$, this probability is at least $r^{-1} (1 - p)^{p^{-1} - 1} \ge r^{-1} / e$.} It follows that, with probability at least $\Omega(1 / r)^{r \mu}$, every group will contribute at least $1$ to $X$, making for a total of at least $r \mu$. Thus
$$\Pr[X \geq r \mu] \ge \Omega(1/r)^{r \mu}.$$ 
This establishes the lower-bound direction.

To derive the upper bound, we need to first derive something that closely resembles the Poor Man's Chernoff Bound for the $X_i$s in a given group. Let $X_a, \ldots, X_b$ be the $X_i$s that comprise some group, and let $C = \sum_{i = a}^b X_i$. Since $\E[C]
 = 1/r$, we know from Markov's inequality that $\Pr[C \ge 1] \le 1/r$. That is, if we flip the coins $X_a, \ldots, X_b$ one after another, the probability that we ever get a $1$ is at most $1/r$. Similarly, once we get that 1, the probability that we ever get \emph{another} 1 in the same group is again at most $1/r$. Continuing like this, we can conclude that 
 \begin{equation}
     \Pr[C \geq k] \le r^{-k}.
     \label{eq:poormanlarge}
 \end{equation}
 
Now we can complete the proof using Proposition \ref{prop:geometric} (or, since the $X_i$s are integers, we can actually use Corollary \ref{cor:geometric}). Define $C_1, C_2, \ldots, C_{r\mu}$ so that $C_i$ is the sum of the $X_i$s in the $i$-th group. Equation \eqref{eq:poormanlarge} tells us that each $C_i$ is bounded above by a geometric random variable. It follows by Corollary \ref{cor:geometric} that
$$\Pr\left[\sum_{i = 1}^{r\mu} C_i \ge r\mu\right] \le (4/r)^{r\mu}.$$
Since $\sum_i X_i = \sum_i C_i$, this completes the proof of the upper bound. 

So, by following almost exactly the same template as before, we once again arrive at nearly matching upper and lower bounds.
\end{proof}

\section{Generalizing to Bennett's Inequality}\label{sec:general}

Part of what makes the proof approach in this paper useful is that the basic approach naturally extends to many other settings. To showcase, this, we will prove in this section an adaptive version of Bennett's inequality \cite{Deviations3}. More generally, the proof also extends to give Azuma's inequality \cite{Azuma67}, McDiarmid's inequality \cite{McDiarmid89}, and various Bernstein-type inequalities \cite{Bernstein1, Bernstein2} (and, indeed, in their most basic formulations, all of these inequalities are corollaries of Theorem \ref{thm:general}, hence our focus on Bennett's inequality). 

% The theorem that we will prove (Theorem \ref{thm:general}) is significantly stronger than what would typically prove in a randomized algorithms course. The point of presenting it here is not Moreover, to the best our knowledge, all pasThe only proofs of th result that we are aware Nonetheless, we will see that the proof can be accomplished using almost exactly the same app same basic proof approach from earlier int h

% TODO: This is basically an adaptive version of Bennett's inequality, see intro of \url{https://di.ku.dk/english/research/phd/phd-theses/2020/Anders_Aamand_Dissertation_1_.pdf} (see also reference to martingale version on Wikipedia)

\begin{theorem}[Adaptive Version of Bennett's Inequality]
Let $n\in \mathbb{N}$ and $v \in \mathbb{R}^+$. Suppose that Alice selects $\mathcal{D}_1,\mathcal{D}_2,\ldots, \mathcal{D}_n $ where each $\mathcal{D}_i $ is a probability distribution over $ [-\infty, 1] $ with mean $0$ and with some variance $ v_i $. Alice selects $\mathcal{D}_1,\mathcal{D}_2,\ldots$ one at a time, and once a given $\mathcal{D}_i $ is selected, a random variable $X_i$ is drawn from the distribution $\mathcal{D}_i $. Alice gets to select the $\mathcal{D}_i $s (and thus also the $v_i$s) adaptively, basing $\mathcal{D}_i$ on the outcomes of $X_1, \ldots, X_{i - 1}$. The only constraint on Alice is that $\sum_{i = 1}^n v_i \le v$.

Define $X = \sum_{i = 1}^n X_i$. Then, for $k \in [1, \sqrt{v}]$, we have that
\begin{equation}\Pr[X \ge k\sqrt{v}] \le 2^{-\Omega(k^2)}. \tag{the small-deviation case}\end{equation}
And for $r \ge 2$, with $1 \le rv$, we have
\begin{equation} \Pr[X \ge r v] \le O(1/r)^{\Omega(rv)}. \tag{the large-deviation case}\end{equation}
\label{thm:general}
\end{theorem}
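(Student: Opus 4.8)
The plan is to rerun the three-part template of Sections~\ref{sec:fair}--\ref{sec:biased}---an Extended-Chebyshev-style maximal inequality, a ``checkpointing'' Poor Man's bound, and a grouping argument sealed by the geometric-sum lemma---now for an adaptive martingale difference sequence. Write $S_j=\sum_{i\le j}X_i$ and let $\mathcal F_j=\sigma(X_1,\dots,X_j)$; the hypotheses say exactly that $\E[X_i\mid\mathcal F_{i-1}]=0$, $\E[X_i^2\mid\mathcal F_{i-1}]=v_i$, $X_i\le 1$, and $\sum_i v_i\le v$ (surely). Because a martingale's cross terms vanish, $\E[S_n^2]=\E[\sum_i v_i]\le v$, and more generally $\E[(S_n-S_\tau)^2\mid\mathcal F_\tau]\le v$ for every stopping time $\tau\le n$. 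Throughout, ``variance'' will play the role that $\mu$ played before, and the one-sided bound $X_i\le 1$ will do a single job: it keeps the overshoot at every checkpoint below $1$.

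First I would prove the adaptive analogue of the Extended Chebyshev inequality (Lemma~\ref{lem:kolmogorov}): $\Pr[\max_j S_j\ge\lambda]\le O(v/\lambda^2)$ for $\lambda=\Omega(\sqrt v)$. This is the one genuinely new ingredient, since the symmetry trick behind Lemma~\ref{lem:kolmogorov} (``with probability $\ge 1/2$ the rest of the walk is $\ge 0$'') no longer applies. The substitute: let $\tau$ be the first time $S_j\ge\lambda$; conditioned on $\tau\le n$, the increment $S_n-S_\tau$ has conditional mean $0$ and conditional variance $\le v$, hence is $\ge -2\sqrt v$ with conditional probability $\ge 3/4$ by Chebyshev, so $\Pr[S_n\ge\lambda-2\sqrt v]\ge\tfrac34\Pr[\tau\le n]$; the left-hand side is $O(v/\lambda^2)$ by Chebyshev on $S_n$, and choosing $\lambda$ a large enough constant times $\sqrt v$ makes $\Pr[\max_j S_j\ge\lambda]\le 1/4$. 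Exactly as in Proposition~\ref{prop:poorman}, placing checkpoints at $c\sqrt v,2c\sqrt v,\dots$ and using $X_i\le 1$ to bound the overshoot, each new checkpoint requires the remaining increments (total conditional variance $\le v$) to climb by $\Omega(\sqrt v)$, an event of probability $\le 1/4$; induction gives the Poor Man's bound $\Pr[\max_j S_j\ge t\,c\sqrt v]\le 4^{-t}$. The same argument applied to any contiguous block of the sequence of total variance $\le\sigma^2$ gives $\Pr[\text{block's running max}\ge t\,c\sigma]\le 4^{-t}$; and widening the checkpoint gap to the integers (to dominate the overshoot) with $\sigma^2=O(1/r)$ gives the large-deviation flavour $\Pr[\text{block sum}\ge t]\le O(1/\sqrt r)^{t}$, the analogue of \eqref{eq:poormanlarge}.

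With these in hand, the two cases follow the grouping argument of the Upper Bound proof. \emph{Small-deviation case:} cut the sequence into consecutive groups, closing a group the first time its accumulated variance reaches $v/k^2$ (the ``tipping'' variable stays inside its group), and pad with empty groups to exactly $k^2+1$ of them. Each group sum $C_g$ equals (a block of variance $<v/k^2$) plus (the tipping variable, which is $\le 1$), so by the per-block Poor Man's bound a suitably rescaled $\overline C_g:=C_g\,k/(c''\sqrt v)$ is stochastically dominated by a geometric random variable with parameter $1/16$. These $\overline C_g$ are not independent, but they satisfy $\Pr[\overline C_g\ge q\mid\mathcal F_{<g}]\le 16^{-q}$, and this conditional bound is all that the witness-sequence union bound in Proposition~\ref{prop:geometric} actually uses (replace each factor $\Pr[Y_i\ge q_i]$ by the corresponding conditional probability and multiply along the chain). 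Since $\sum_g C_g=X$ exactly, Proposition~\ref{prop:geometric} yields $\Pr[X\ge\Theta(k\sqrt v)]\le 4^{-\Omega(k^2)}=2^{-\Omega(k^2)}$; a change of variables together with the trivial bound in the range $k=O(1)$ completes it. \emph{Large-deviation case:} discard the indices $B=\{i:v_i>1/r\}$; there are fewer than $rv$ of them and each contributes at most $1$ to $X$, so it suffices to bound $\Pr[\sum_{i\notin B}X_i\ge rv]$ (which gives the stated bound after a constant-factor rescaling of $r$). Now every surviving variable has variance $\le 1/r$, so cutting into $\approx rv$ consecutive groups of accumulated variance $\approx 2/r$ keeps every group's total variance $O(1/r)$; the integer-checkpoint bound gives $\Pr[C_g\ge t]\le O(1/\sqrt r)^{t}$, and Proposition~\ref{prop:geometric} (again in its conditional form) gives $\Pr[\sum_{i\notin B}X_i\ge rv]\le O(1/\sqrt r)^{\Omega(rv)}=O(1/r)^{\Omega(rv)}$. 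The leftover range $r=O(1)$ follows from the small-deviation case at $k=\sqrt v$, since then $rv\ge v$ and $O(1/r)^{\Omega(rv)}=2^{-\Omega(v)}$.

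The main obstacle is the maximal inequality in the first step: adapting Lemma~\ref{lem:kolmogorov} to an adaptive, non-symmetric sequence forces us to replace the clean ``flip a fair coin for the rest of the walk'' argument with the stopping-time-plus-second-moment estimate above. The only other subtleties are bookkeeping---noticing that Proposition~\ref{prop:geometric} only ever needed a conditional version of its hypothesis, and that a single high-variance variable can never hurt the upper tail because it is bounded above by $1$ (which is precisely why the ``tipping variable'' is harmless in the small-deviation grouping, and why discarding $B$ suffices in the large-deviation grouping). Everything else is a faithful rerun of the coin-flip proofs.
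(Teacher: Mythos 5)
Your proposal is correct and follows, modulo two local substitutions, the same architecture as the paper's proof (variance-budget lemma, a maximal inequality over prefixes, a checkpointing/Poor Man's step, a greedy variance-based partition, and an adaptive version of the geometric-sum lemma). The interesting deviation is the maximal inequality. The paper proves it in one line by exploiting the adversarial framing: since the theorem is universally quantified over Alice's strategies, and Alice is free to set all future $X_i := 0$ the instant a prefix sum crosses the threshold, any worst-case tail bound on $X$ is automatically a worst-case tail bound on $\max_j S_j$; this is exactly the reduction behind Lemma~\ref{lem:chebyshev} and, in a footnote, Lemma~\ref{lem:dimgrowth}. You instead run a classical Kolmogorov-inequality argument: stop at the first crossing, use $\E[(S_n-S_\tau)^2\mid\mathcal F_\tau]\le v$ and conditional Chebyshev to show the walk is unlikely to fall all the way back, then apply unconditional Chebyshev at $S_n$. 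This is slightly more work and picks up an extra constant (you get $O(v/\lambda^2)$, not $v/\lambda^2$), but it has the advantage of not leaning on Alice's ability to choose deterministic-zero increments, so it would survive in variants of the model where the budget constraint is $\sum v_i = v$ with equality. Your second substitution --- keeping the ``tipping'' variable inside its group and absorbing it via $X_i\le 1$ rather than stripping out ``oversized'' $X_i$s up front, as the paper does in Sections~\ref{sec:gensmall}--\ref{sec:genlarge} --- is an equivalent bookkeeping choice; it needs the same stopping-time care (the block's partial sums are a sub-budget martingale only after one notionally truncates the process at the tipping index), but it goes through. You also correctly identified that Proposition~\ref{prop:geometric} only uses a conditional hypothesis, which is precisely what the paper isolates as Proposition~\ref{prop:geometric2}. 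The one caveat to flag: your per-block bound $\Pr[\text{block sum}\ge t]\le O(1/\sqrt r)^t$ for blocks of variance $O(1/r)$ needs integer-spaced checkpoints because the one-step overshoot can be a constant, and your Kolmogorov-derived constant in front of $v/\lambda^2$ requires $\lambda\ge c\sqrt v$ with $c$ somewhat large, so the large-deviation argument only kicks in for $r$ above a constant threshold --- your closing remark that the range $r=O(1)$ reduces to the small-deviation case is exactly the patch needed, and it matches how the paper's bound $\eqref{eq:largegeneral}$ is also vacuous for small $r$.
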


In the small-deviation case, and with a few extra constraints on Alice (namely that $X_i \in [-1, 1]$ and that $\sum v_i = v$), we can also get a matching lower bound. As far as I know, this lower bound has not appeared in past work, and does not follow from standard techniques.

\begin{theorem}[Lower Bound for Bennett's Inequality]
    Let $n, v \in \mathbb{N}$. Suppose that Alice selects $\mathcal{D}_1,\mathcal{D}_2,\ldots, \mathcal{D}_n $ where each $\mathcal{D}_i $ is a probability distribution over $ [-1, 1] $ with mean $0$ and with some variance $ v_i $. Alice selects $\mathcal{D}_1,\mathcal{D}_2,\ldots$ one at a time, and once a given $\mathcal{D}_i $ is selected, a random variable $X_i$ is drawn from the distribution $\mathcal{D}_i $. Alice gets to select the $\mathcal{D}_i $s (and thus also the $v_i$s) adaptively, basing $\mathcal{D}_i$ on the outcomes of $X_1, \ldots, X_{i - 1}$. The only constraint on Alice is that $\sum_{i = 1}^n v_i = v$.

Define $X = \sum_{i = 1}^n X_i$. Then, for $k \in [1, \sqrt{v}]$, we have that
$$\Pr[X \ge \Omega(k\sqrt{v})] \ge 2^{-O(k^2)}.$$
\label{thm:generallower}
\end{theorem}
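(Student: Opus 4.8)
The plan is to run the same two-level template as the fair-coin lower bound (Proposition~\ref{prop:coinlower}): carve the sequence into $\Theta(k^2)$ consecutive blocks (``groups''), show that each group independently pushes $X$ up by $\Omega(\sqrt v/k)$ with at least constant probability, and then multiply. The one genuinely new feature is that Alice is adaptive and picks the variances $v_i$ on the fly, so the group boundaries cannot be fixed in advance --- Alice could simply starve a pre-chosen block of all its variance. Instead I would define the groups by stopping times on the \emph{accumulated variance}. Since each $\mathcal{D}_i$ (hence $v_i$) is determined from $X_1,\dots,X_{i-1}$, the process $t\mapsto\sum_{i\le t}v_i$ is predictable, so $\tau_j:=\min\{t:\sum_{i\le t}v_i\ge j\cdot(v/k^2+1)\}$ is a stopping time, and ``group $j$'' is $\{X_{\tau_{j-1}+1},\dots,X_{\tau_j}\}$. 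Because a mean-$0$ random variable on $[-1,1]$ has variance at most $1$ and $\sum_i v_i=v$, this yields $m=\Theta(k^2)$ complete groups, each with total conditional variance $\Theta(v/k^2)$, which is at least a positive constant since $k\le\sqrt v$.

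The heart of the proof is a per-group anti-concentration lemma: if $S$ is the sum of a martingale-difference block with increments in $[-1,1]$ and total conditional variance $\sigma^2=\Theta(v/k^2)=\Omega(1)$, then $\Pr[S\ge c\sigma\mid \mathcal{F}_{\text{block start}}]\ge c'$ for universal constants $c,c'>0$. I would derive this purely from moment information. By orthogonality of martingale increments and optional stopping, $\E[S]=0$ and $\E[S^2]=\Theta(\sigma^2)$; and a standard fourth-moment martingale estimate --- expand $\E[S_m^4]$ one increment at a time, use $|X_i|\le1$ to dominate the third-moment cross terms by second moments, use Doob's $L^2$ maximal inequality to bound $\E[\max_t S_t^2]$ by $O(\sigma^2)$, and use $\sigma=\Omega(1)$ to absorb lower-order terms --- gives $\E[S^4]=O(\sigma^4)$. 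These two bounds feed a one-sided Paley--Zygmund argument: the reverse Hölder inequality $\E[S^2]\le\E[|S|]^{2/3}\E[S^4]^{1/3}$ forces $\E[S^+]=\tfrac12\E[|S|]=\Omega(\sigma)$; truncating $S^+$ at a small constant multiple of $\sigma$ and applying Cauchy--Schwarz to the surviving part (whose second moment is at most $\E[S^2]=O(\sigma^2)$) then yields $\Pr[S\ge c\sigma]\ge c'$.

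With the lemma in hand the assembly is routine. Conditioning successively on $\mathcal{F}_{\tau_0},\mathcal{F}_{\tau_1},\dots$ and applying the tower property, $\Pr[\text{every complete group has sum}\ge c\sigma]\ge (c')^{m}=2^{-O(k^2)}$, and on this event the complete groups alone contribute $m\cdot c\sigma=\Omega(k\sqrt v)$ to $X$. The leftover partial block after the last complete group has conditional variance below $v/k^2$, so a conditional Chebyshev bound shows it is at least $-\tfrac12 m c\sigma$ with probability at least $\tfrac12$ given everything preceding it; combining, with probability $2^{-O(k^2)}$ we get $X\ge\tfrac12 m c\sigma=\Omega(k\sqrt v)$, which is the theorem. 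The degenerate range where $k$ (or $v$) is a constant is either trivial or is exactly the single-group lemma, and can be handled separately.

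The step I expect to be the real obstacle is the one-sided anti-concentration lemma, on two counts: (i) the fourth-moment bound $\E[S^4]=O(\sigma^4)$ genuinely uses that the increments are bounded \emph{on both sides} (this is precisely why the lower bound demands $X_i\in[-1,1]$ rather than the one-sided $[-\infty,1]$ allowed in Theorem~\ref{thm:general}); and (ii) one must break the asymmetry between the two tails, since the $\mathcal{D}_i$ need not be symmetric and no naive reflection trick is available --- this is what the reverse-Hölder-plus-truncation maneuver buys us. Everything else --- the stopping-time grouping, the variance bookkeeping, and the product over the $\Theta(k^2)$ groups --- parallels the fair-coin argument and should go through with only routine effort.
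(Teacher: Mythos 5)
Your overall template matches the paper's exactly: greedily carve the sequence into $\Theta(k^2)$ variance-budgeted groups (the paper describes this as a greedy partition, which is the same thing as your stopping-time construction), prove a per-group one-sided anti-concentration statement of the form $\Pr[C_i \ge \Omega(\sqrt v/k)\mid C_1,\dots,C_{i-1}] \ge \Omega(1)$, and take a product over the $\Theta(k^2)$ groups. Where you diverge is in the ingredient used to control the upper tail inside the per-group lemma. The paper recycles its already-proven Poor Man's Chernoff Bound (Lemma~\ref{lem:poorgen}), which gives an \emph{exponential} tail bound $\Pr[|C_i|\ge 8j\sqrt v/k]\le 2^{-j}$; this immediately yields truncated-moment estimates $\E[Q^2\mathbb{I}_{|Q|\ge c}]\le 1/c$ and $\E[|Q|\mathbb{I}_{|Q|\ge c}]\le 1/c$, and then a short contradiction argument against $\E[Q^2]\ge 1$ forces $\Pr[|Q|\ge \Omega(1)]=\Omega(1)$. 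You instead prove a fourth-moment martingale bound $\E[S^4]=O(\sigma^4)$ from scratch (via direct expansion with Doob's $L^2$ maximal inequality, or equivalently BDG) and then run Paley--Zygmund/reverse Hölder to get $\E[|S|]=\Omega(\sigma)$. Both routes are correct and feed into the identical final two steps: use $\E[S]=0$ to convert two-sided anti-concentration into a lower bound on $\E[S^+]$, and then truncate and apply Cauchy--Schwarz (the paper does a second small contradiction argument, which amounts to the same thing). Your route is a bit more work because it re-derives a martingale moment inequality rather than piggybacking on the Poor Man's bound that the paper's framework has already built, but it makes the Paley--Zygmund connection more explicit and is arguably closer to how one would remember this in isolation. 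You are also right on both of your flagged subtleties: the two-sided boundedness $X_i\in[-1,1]$ is essential (the paper silently uses it too, in order to invoke the Poor Man's bound on $-C_i$ as well as $C_i$), and the $\E[S]=0$ truncation trick is exactly how both proofs break the two-sided/one-sided asymmetry.

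One small bookkeeping difference: since you threshold on $j\cdot(v/k^2+1)$ you end up with a partial leftover block, which you handle by a conditional Chebyshev bound; the paper's greedy rule (``start a new group only if at least $v/k^2$ variance remains'') attaches the leftover to the final group so no such correction is needed. Either is fine.
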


It is worth noting that, as an immediate corollary of Theorem \ref{thm:general}, we also get a general-purpose Chernoff bound for non-iid real-valued coin flips (sometimes known as Hoeffding's bound).

\begin{corollary}[Chernoff Bound for Non-Identical Real-Valued Coin Flips \cite{Hoeffding94}]
Let $X_1, \ldots, X_n \in [0, 1]$ be independent random variables with means $p_1, \ldots, p_n$. Let $\mu = \sum_i p_i$ and let $X = \sum X_i$. Then, for any integer $k \le \sqrt{\mu}$,
\begin{equation}
    \Pr\left[X \ge \mu + k \sqrt{\mu}\right] \le 2^{-\Omega(k^2)}.
    \label{eq:hoeff}
\end{equation}
And for any $r \ge 2$ with $1 \le r\mu$,
$$\Pr\left[X \ge r \mu \right] \le O(1/r)^{\Omega(r\mu)}.$$
\label{cor:generalch}
\end{corollary}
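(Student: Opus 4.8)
The plan is to obtain this as an essentially immediate consequence of Theorem~\ref{thm:general} via \emph{centering}. First I would define $Y_i = X_i - p_i$, so that each $Y_i$ has mean $0$ and support contained in $[-p_i,\,1-p_i] \subseteq [-1,1] \subseteq [-\infty,1]$; this makes each $Y_i$ a legal distribution $\mathcal{D}_i$ for Alice. Since the $X_i$ are independent, we are in the (non-adaptive) special case where $\mathcal{D}_i$ ignores $X_1, \ldots, X_{i-1}$, so Theorem~\ref{thm:general} is applicable. The one genuine thing to check is the variance budget: letting $v_i = \var(X_i)$ denote the variance of $Y_i$, the fact that $X_i \in [0,1]$ gives $X_i^2 \le X_i$, hence $v_i = \E[X_i^2] - p_i^2 \le p_i - p_i^2 \le p_i$, and so $\sum_i v_i \le \sum_i p_i = \mu$. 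Thus I can apply Theorem~\ref{thm:general} with the variance bound $v := \mu$.

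With that setup, the small-deviation case is immediate: $\{X \ge \mu + k\sqrt\mu\}$ is literally the event $\{\sum_i Y_i \ge k\sqrt{v}\}$, and for $k \in [1,\sqrt\mu] = [1,\sqrt v]$ the small-deviation half of Theorem~\ref{thm:general} bounds its probability by $2^{-\Omega(k^2)}$, which is \eqref{eq:hoeff}.

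For the large-deviation case, note $\{X \ge r\mu\} = \{\sum_i Y_i \ge (r-1)\mu\} = \{\sum_i Y_i \ge (r-1)v\}$. For $r \ge 3$ we have $r-1 \ge 2$, so the large-deviation half of Theorem~\ref{thm:general} applies with parameter $r-1$ and gives a bound of $O(1/(r-1))^{\Omega((r-1)v)}$; since $r-1 = \Theta(r)$ on this range, this is $O(1/r)^{\Omega(r\mu)}$. The leftover band $2 \le r < 3$ is a triviality: there $\Pr[X \ge r\mu] \le \Pr[\sum_i Y_i \ge \mu] \le \Pr[\sum_i Y_i \ge \lfloor\sqrt\mu\rfloor\sqrt v] \le 2^{-\Omega(\mu)}$ by the small-deviation bound (the cases $\mu < 1$ being vacuous), and $2^{-\Omega(\mu)}$ is of the required form $O(1/r)^{\Omega(r\mu)}$ whenever $r = \Theta(1)$.

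I do not expect any real obstacle: the corollary is advertised as immediate and indeed is. If I had to name a ``hard part,'' it is just the bookkeeping of two small points — verifying $v_i \le p_i$ so that the variance budget can be taken to be exactly $\mu$, and performing the cosmetic shift $r \mapsto r-1$ (with the attendant $2 \le r < 3$ edge case) to match $\Pr[X \ge r\mu]$ against the $\Pr[\,\cdot \ge r'v\,]$ shape of Theorem~\ref{thm:general}.
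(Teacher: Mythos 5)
Your proof is correct and takes the same route as the paper: the paper's entire argument is the one-line observation that $\var(X_i) = \E[X_i^2] - p_i^2 \le p_i$ (so $\sum v_i \le \mu$) followed by ``the result follows from Theorem~\ref{thm:general}.'' You have simply made explicit the centering $Y_i = X_i - p_i$ and the cosmetic $r \mapsto r-1$ shift (with its $2 \le r < 3$ edge case) that the paper silently subsumes in ``follows from.''
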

\begin{proof}
    Each $X_i$ has variance $\E[X^2] - p_i^2 \le \E[X] \le p_i$. So the result follows from Theorem \ref{thm:general}.
\end{proof}

The rest of this section is structured as follows. Section \ref{sec:generalprelim} presents some basic prelimanaries, culminating in a variation of the Poor Man's Chernoff Bound that can be used in the proof Theorem \ref{thm:generallower}. Sections \ref{sec:gensmall} and \ref{sec:genlarge} then prove the small and large deviation cases, respectively, for Theorem \ref{thm:general}. Finally, Section \ref{sec:generallower} proves Theorem \ref{thm:generallower}.

\subsection{Preliminaries}\label{sec:generalprelim}

We begin by proving some preliminary lemmas that will be useful for both the upper and lower bounds. Our first lemma bounds the variance of $X$ by $v$.
\begin{lemma}
We have $\E[X^2] = \E[\sum_i v_i] \le  v$.
\label{lem:var}
\end{lemma}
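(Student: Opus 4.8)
The plan is to expand $X^2$ into its $n$ diagonal terms $X_i^2$ and its $2\binom n2$ off-diagonal terms $X_iX_j$, and to kill each kind of term by conditioning on the "history" that Alice sees before she commits to $\mathcal{D}_i$. Concretely, I would let $\mathcal{F}_{i}$ denote the information revealed through step $i$, namely the outcomes $X_1,\ldots,X_i$; since each $\mathcal{D}_i$ is a deterministic function of $X_1,\ldots,X_{i-1}$, both $\mathcal{D}_i$ and its variance $v_i$ are $\mathcal{F}_{i-1}$-measurable, and conditioned on $\mathcal{F}_{i-1}$ the variable $X_i$ is distributed exactly as $\mathcal{D}_i$, so $\E[X_i \mid \mathcal{F}_{i-1}] = 0$ and $\E[X_i^2 \mid \mathcal{F}_{i-1}] = v_i$. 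These two facts are the whole engine of the proof.

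Next I would handle the two types of terms. For an off-diagonal term with $i < j$, apply the tower rule conditioning on $\mathcal{F}_{j-1}$: the factor $X_i$ is $\mathcal{F}_{j-1}$-measurable, while $\E[X_j \mid \mathcal{F}_{j-1}] = 0$, so $\E[X_iX_j] = \E\bigl[X_i\,\E[X_j \mid \mathcal{F}_{j-1}]\bigr] = 0$. For a diagonal term, condition on $\mathcal{F}_{i-1}$: $\E[X_i^2] = \E\bigl[\E[X_i^2 \mid \mathcal{F}_{i-1}]\bigr] = \E[v_i]$. Summing over the expansion $X^2 = \sum_{i} X_i^2 + 2\sum_{i<j} X_iX_j$ then gives $\E[X^2] = \sum_{i=1}^n \E[v_i] = \E\bigl[\sum_i v_i\bigr]$, and the constraint $\sum_i v_i \le v$ immediately yields $\E[X^2] \le v$.

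The only point requiring genuine care is integrability, since the $\mathcal{D}_i$ are permitted to have support stretching down to $-\infty$: I would observe up front that each $X_i$ has finite second moment (it equals $v_i$, and $\E[v_i] \le v < \infty$), so by Cauchy–Schwarz every product $X_iX_j$ is integrable, which legitimizes both the term-by-term expansion of $\E[X^2]$ and each use of the tower rule. Apart from that bookkeeping there is no real obstacle here — Alice's adaptivity is exactly neutralized by always conditioning on the past before the step whose randomness we are averaging over.
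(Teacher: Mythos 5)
Your proof is correct and rests on exactly the same engine the paper uses---conditioning on the past to get $\E[X_i \mid \mathcal{F}_{i-1}] = 0$ and $\E[X_i^2 \mid \mathcal{F}_{i-1}] = v_i$---the only cosmetic difference being that you expand $X^2$ fully into diagonal and off-diagonal terms, whereas the paper telescopes via the partial sums $Y_i = \sum_{j \le i} X_j$, which packages all the cross terms at step $i$ into a single $\E[X_i Y_{i-1}] = 0$. Your added remark on integrability is a small point of rigor the paper leaves implicit.
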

\begin{proof}
Define $Y_i = \sum_{j = 1}^i X_j$. It suffices to show that for each $i \in [n]$, we have
$\E[Y_i^2 - Y_{i - 1}^2] = \E[v_i].$ By linearity of expectation,
$$\E[Y_i^2] = \E[(X_i + Y_{i - 1})^2] = \E[X_i^2] + \E[Y_{i - 1}^2] + 2\E[X_i Y_{i - 1}].$$
No matter the outcome of $Y_{i - 1}$, we have that $\E[X_i] = 0$, so $\E[X_i Y_{i - 1}] = 0$.
Thus 
$\E[Y_i^2] = \E[X_i^2] + \E[Y_{i - 1}^2] = \E[v_i] + \E[Y_{i - 1}^2],$
as desired.
\end{proof}

Let $Y_{\text{max}} = \max_j \sum_{i = 1}^j X_i$ be the largest sum achieved by any prefix of the $X_i$s.
Our next lemma uses Chebyshev's inequality to bound $Y_{\text{max}}$.
\begin{lemma}
For any $\ell \ge 1$, 
$\Pr[Y_{\text{max}}^2 \ge \ell v] \le \frac{1}{\ell}.$
\label{lem:chebyshev}
\end{lemma}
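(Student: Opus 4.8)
The plan is to mimic the proof of the Extended Chebyshev lemma from Section~\ref{sec:onepage}, but now in the martingale setting of Theorem~\ref{thm:general}. The key observation is that $\Pr[Y_{\text{max}}^2 \ge \ell v]$ should be comparable to $\Pr[Y_n^2 \ge \ell v]$ up to a constant factor, and the latter is immediately bounded by Chebyshev's inequality combined with the variance bound $\E[Y_n^2] \le v$ from Lemma~\ref{lem:var}. So the real content is the ``reflection''-style step: if some prefix sum ever reaches $\sqrt{\ell v}$, then with constant probability the sum stays nonnegative thereafter, so the final sum $Y_n$ is also at least $\sqrt{\ell v}$.

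Concretely, first I would let $\tau$ be the first time $j$ at which $\sum_{i=1}^j X_i \ge \sqrt{\ell v}$ (a stopping time), and condition on the event $\{\tau \le n\}$, which is exactly $\{Y_{\text{max}} \ge \sqrt{\ell v}\}$ (using $\ell v \ge v \ge 1$ so the threshold is meaningful, and noting $Y_{\text{max}} \ge 0$ always). Then I would argue that, conditioned on $\tau = j$ and on the history $X_1,\dots,X_j$, the remaining increment $\sum_{i=j+1}^n X_i$ is nonnegative with probability at least $1/2$. This is the step that needs a little care: unlike the fair-coin case, the increments aren't symmetric, but each $\mathcal{D}_i$ has mean $0$, so $\E[\sum_{i=j+1}^n X_i \mid \text{history}] = 0$, and a mean-zero random variable is $\ge 0$ with probability... not necessarily $1/2$. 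The clean fix is Lemma~\ref{lem:chebyshev}'s proof should instead invoke that for a mean-zero random variable $Z$, $\Pr[Z \ge 0] \ge 1/2$ fails in general, so I would instead use $\Pr[Z > -\epsilon]$ arguments, or — more simply — observe that we only need $\Pr[\sum_{i=j+1}^n X_i \ge 0] $ to be bounded below by some absolute constant $c > 0$; even $c$ replaced by, say, handling it via the fact that each $X_i \le 1$ means the sum can't jump up, only the downward tail matters, and mean zero plus a one-sided bound forces $\Pr[Z \ge 0]$ to be bounded below. Actually the cleanest route: since $X_i \le 1$, if $\tau = j$ then in fact $Y_j \in [\sqrt{\ell v}, \sqrt{\ell v} + 1)$, and by the mean-zero property of the tail together with Chebyshev on the tail variance (which is at most $v$), $\Pr[\sum_{i > j} X_i \le -\sqrt{\ell v}] \le v/(\ell v) = 1/\ell \le 1$, which is too weak — so instead I expect the intended argument is simply $\Pr[\text{tail} \ge 0] \ge 1/2$ justified by a symmetrization or by the specific structure, giving the factor-$2$ loss.

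Thus the skeleton is: (1) $\{Y_{\text{max}} \ge \sqrt{\ell v}\} = \{\tau \le n\}$; (2) for any $j$, $\Pr[Y_n \ge \sqrt{\ell v} \mid \tau = j] \ge \Pr[\sum_{i=j+1}^n X_i \ge 0 \mid \tau = j] \ge 1/2$; (3) summing over $j$, $\Pr[Y_n \ge \sqrt{\ell v}] \ge \tfrac12 \Pr[\tau \le n] = \tfrac12 \Pr[Y_{\text{max}} \ge \sqrt{\ell v}]$; (4) rearrange to $\Pr[Y_{\text{max}}^2 \ge \ell v] \le 2\Pr[Y_n \ge \sqrt{\ell v}] \le 2 \Pr[Y_n^2 \ge \ell v] \le 2 \E[Y_n^2]/(\ell v) \le 2v/(\ell v) = 2/\ell$. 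This gives $2/\ell$, not $1/\ell$; to get the sharper $1/\ell$ stated, I would instead apply Chebyshev directly to the one-sided event or absorb the factor $2$ by noting $\Pr[Y_n \ge \sqrt{\ell v}]$ is a one-tailed probability so Chebyshev gives $\E[Y_n^2]/(\ell v) \cdot \tfrac12$ is not valid without symmetry — so realistically the factor of $2$ from the reflection step is cancelled by using a one-sided Chebyshev (Cantelli) bound $\Pr[Y_n \ge \sqrt{\ell v}] \le \E[Y_n^2]/(\E[Y_n^2] + \ell v) \le v/(v + \ell v) \le 1/\ell$ for $\ell \ge 1$. The main obstacle I anticipate is nailing down step~(2) rigorously in the adaptive, non-symmetric setting — making precise why the tail sum is nonnegative with probability at least $1/2$ — and it's possible the paper finesses this by only claiming the weaker constant and adjusting elsewhere, or by using the one-sided Chebyshev to recover the clean $1/\ell$.
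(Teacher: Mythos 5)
The reflection-style approach you propose does not work in this setting, and the difficulty you flagged in step~(2) is not a gap you can patch---it is fatal. A mean-zero random variable $Z$ that is bounded above by $1$ can still have $\Pr[Z \ge 0]$ arbitrarily small: take $Z = 1$ with probability $\varepsilon$ and $Z = -\varepsilon/(1-\varepsilon)$ otherwise. So ``mean zero plus a one-sided bound forces $\Pr[Z \ge 0]$ to be bounded below'' is false, and there is no absolute constant $c>0$ you can hope for in place of $1/2$. The symmetry that made Lemma~\ref{lem:kolmogorov} work for fair coins is genuinely absent here, and neither Cantelli nor conditioning on $Y_j \in [\sqrt{\ell v}, \sqrt{\ell v}+1)$ rescues the argument.

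The paper sidesteps the reflection step entirely by exploiting the adaptivity of the setup. The Chebyshev bound $\Pr[X^2 \ge \ell v] \le 1/\ell$ (via Lemma~\ref{lem:var}) holds for \emph{every} strategy Alice might use. Now suppose some strategy $\mathcal{A}$ achieves $\Pr[Y_{\text{max}}^2 \ge \ell v] = q$. Alice can modify $\mathcal{A}$ to a new legal strategy $\mathcal{A}'$: play as in $\mathcal{A}$, but the moment the prefix sum squared reaches $\ell v$, set all remaining $\mathcal{D}_i$ to be the point mass at $0$. This is permitted (mean $0$, variance $0$, supported in $[-\infty,1]$, and it only decreases $\sum v_i$), and it freezes the running sum, so under $\mathcal{A}'$ we have $X^2 \ge \ell v$ with probability at least $q$. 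Applying the Chebyshev bound to $\mathcal{A}'$ gives $q \le 1/\ell$. In other words, because Alice could always elect to stop, any tail bound that holds uniformly over strategies for the final sum $X$ automatically holds for the running maximum $Y_{\text{max}}$---no factor of $2$, no reflection, no symmetry needed. This ``stop-and-freeze'' observation is the key idea your proposal is missing, and it is qualitatively simpler than anything involving conditional distributions of the remaining increments.
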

\begin{proof}
By Lemma \ref{lem:var}, we know that the variance of $X$ is at most $v$.  
By Chebyshev's inequality, it follows that
$\Pr[X^2 \ge \ell v] \le \frac{1}{\ell}.$

Notice, however, that if $Y_{\text{max}}^2 \ge \ell v$, then Alice can also force $X^2 \ge \ell v$:
as soon as $(\sum_{i = 1}^j X_i)^2 \ge \ell v$ for some $j$, she simply sets $X_{j + 1}, \ldots, X_{n}$ 
to be deterministically $0$. Thus any tail bound on $X$ implies the same tail bound on $Y_{\text{max}}$,
and the lemma is proven.
\end{proof}

Our next lemma establishes the key technical ingredient that we will need in order to obtain a Poor Man's Chernoff Bound. The lemma transforms bounds on $\Pr[Y_{\text{max}} \ge \alpha]$, for a given $\alpha$, into bounds on $\Pr[Y_{\text{max}} \ge \alpha + \beta + 1 \mid Y_{\text{max}} \ge \beta]$ for a given $\alpha, \beta$.
\begin{lemma}[Law of diminishing growth]
Let $q$ be a real number such that
$\Pr[Y_{\text{max}} \ge \alpha] \le q$
holds independently of Alice's strategy. For any $\beta > 0$, if Alice follows a strategy $\mathcal{A}$ such that $\Pr[Y_{\text{max}} \ge \beta] > 0$, then
$\Pr[Y_{\text{max}} \ge \beta + \alpha + 1 \mid Y_{\text{max}} \ge \beta] \le q.$
\label{lem:dimgrowth}
\end{lemma}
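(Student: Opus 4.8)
The plan is to condition on the first time $\tau$ at which $Y_{\text{max}}$ reaches level $\beta$, and then restart the process from that moment. Fix a strategy $\mathcal{A}$ with $\Pr[Y_{\text{max}} \ge \beta] > 0$. Let $\tau$ be the first index $j$ with $\sum_{i=1}^{j} X_i \ge \beta$; on the event $\{Y_{\text{max}} \ge \beta\}$ this $\tau$ is well-defined. I would observe that if $Y_{\text{max}} \ge \beta + \alpha + 1$, then either the prefix sum already reached $\beta + \alpha + 1$ at some point, or — more to the point — after time $\tau$ the running sum must climb by at least an additional $\alpha$ (in fact, at time $\tau$ the sum is somewhere in $[\beta, \beta+1)$ since each $X_i \le 1$, so reaching $\beta+\alpha+1$ requires a further gain of more than $\alpha$). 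Concretely: conditioned on $\tau$ and on the history $X_1,\ldots,X_\tau$, the value $\sum_{i=1}^{\tau} X_i$ lies in $[\beta,\beta+1)$, so the event $Y_{\text{max}} \ge \beta+\alpha+1$ forces $\max_{j > \tau}\left(\sum_{i=\tau+1}^{j} X_i\right) \ge \alpha$.

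Next I would invoke the hypothesis that $\Pr[Y_{\text{max}} \ge \alpha] \le q$ holds \emph{independently of Alice's strategy}. The key point is that the suffix process $X_{\tau+1}, X_{\tau+2}, \ldots, X_n$, conditioned on the entire history up to time $\tau$, is itself an instance of exactly the same kind of process — Alice picks distributions over $[-\infty,1]$ with mean $0$, adaptively, with the running variance budget only decreased — just with fewer than $n$ remaining steps and a smaller variance cap. (A remaining-budget $v' \le v$ is fine: padding with deterministic-$0$ variables shows that a process with budget $v'$ and $n' < n$ steps is a special case of one with budget $v$ and $n$ steps, so the universal bound $\Pr[Y_{\text{max}} \ge \alpha] \le q$ still applies.) Hence, for every fixed value of $\tau$ and every fixed history,
\[
\Pr\!\left[\max_{j > \tau}\Big(\sum_{i=\tau+1}^{j} X_i\Big) \ge \alpha \;\Big|\; \tau,\ X_1,\ldots,X_\tau\right] \le q .
\]
Combining this with the deduction of the previous paragraph and averaging over the conditional distribution of $(\tau, X_1,\ldots,X_\tau)$ given $\{Y_{\text{max}}\ge\beta\}$ yields $\Pr[Y_{\text{max}} \ge \beta+\alpha+1 \mid Y_{\text{max}} \ge \beta] \le q$.

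The main obstacle is making the ``restart'' step fully rigorous: one must be careful that $\tau$ is a stopping time so that the post-$\tau$ process is genuinely a fresh adaptive instance, and that the universal bound $\Pr[Y_{\text{max}}\ge\alpha]\le q$ really does transfer to a process with a shorter horizon and a smaller variance budget. The ``$+1$'' in the statement is exactly the slack needed because stopping at the first crossing of $\beta$ can overshoot by up to $1$ (since increments are bounded above by $1$ but not below), and I would flag that this is where the bounded-above-by-$1$ hypothesis is used. Everything else is bookkeeping: decomposing the event by the location of $\tau$, and a routine application of the law of total probability.
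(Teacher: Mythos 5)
Your proof is correct and takes essentially the same approach as the paper: you condition on the first crossing time $\tau$ of level $\beta$, use the increment bound $X_i \le 1$ to place $\sum_{i\le\tau}X_i$ in $[\beta,\beta+1)$, and then apply the universal bound $\Pr[Y_{\text{max}}\ge\alpha]\le q$ to the suffix process (your padding-with-zeros observation is precisely the paper's footnote justifying why that bound transfers). The only difference is that you flag the stopping-time and shorter-horizon bookkeeping more explicitly; the paper treats these as immediate.
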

\begin{proof}
Suppose Alice follows strategy $\mathcal{A}$ and that $Y_{\text{max}} \ge \beta$. Let $j$ be the smallest $j$ such that
$\beta \le \sum_{i = 1}^j X_i \le \beta + 1.$
Define 
$Y'_{\text{max}} = \max_{k > j} \sum_{i = j + 1}^{k} X_i$ 
to be the maximum sum achieved by any prefix of $X_{j + 1}, X_{j + 2}, \ldots, X_n$. 

If $Y_{\text{max}} \ge \beta + \alpha + 1$, then we must have $Y'_{\text{max}} \ge \alpha$. On the other hand, by the definition of $q$, we have $\Pr[Y'_{\text{max}} \ge \alpha] \le q$.\footnote{Here, we are implicitly using the following observation: any strategy that Alice an use to make $Y'_{\text{max}}$ large could also be used to make $Y_{\text{max}}$ large, as Alice can set $X_1, X_2, \ldots, X_j := 0$ in order to force $Y'_{\text{max}} = Y_{\text{max}}$.}  Thus
$\Pr[Y_{\text{max}} \ge \beta + \alpha + 1 \mid Y_{\text{max}} \ge \beta] \le q.$
\end{proof}

Building on Lemma \ref{lem:dimgrowth}, we can immediately obtain a Poor Man's Chernoff Bound:
\begin{lemma}[Poor Man's Chernoff Bound]
If $v \ge 1$, then for any $k \in \mathbb{N}$,
\begin{equation}
\Pr[X \ge 4k \sqrt{v}] \le 1/4^{k}.
\label{eq:simple2}
\end{equation}
And if $v \le 1$, then for any $k \in \mathbb{N}$,
\begin{equation}
\Pr[X \ge k] \le v^{k / 2}.
\label{eq:simple1}
\end{equation}
\label{lem:poorgen}
\end{lemma}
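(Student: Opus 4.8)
The plan is to mimic the checkpoint argument from the fair-coin and biased-coin cases, but to drive it using Lemma \ref{lem:dimgrowth} (the law of diminishing growth) so that the adaptivity of Alice causes no trouble. First consider the regime $v \ge 1$. The idea is to set up checkpoints at heights $0, 4\sqrt{v}, 8\sqrt{v}, 12\sqrt{v}, \dots$ for the running maximum $Y_{\text{max}}$. Concretely, I would start from the base estimate: by Lemma \ref{lem:chebyshev} with an appropriate constant $\ell$ (say $\ell = 16$), we get $\Pr[Y_{\text{max}} \ge 4\sqrt{v}] \le \Pr[Y_{\text{max}}^2 \ge 16 v] \le 1/16$, and this holds independently of Alice's strategy. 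That is exactly the hypothesis needed to invoke Lemma \ref{lem:dimgrowth} with $q = 1/16$ and $\alpha = 4\sqrt{v}$.

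Next I would iterate. Define, for each $s \ge 1$, the event that $Y_{\text{max}}$ reaches height roughly $s \cdot 4\sqrt{v}$ (with a small additive slack of $+1$ per step, coming from the ``$+1$'' in Lemma \ref{lem:dimgrowth}). Applying Lemma \ref{lem:dimgrowth} repeatedly — with $\beta$ equal to the height of the previous checkpoint and $\alpha = 4\sqrt{v}$ — gives that each successive checkpoint is reached, conditioned on the previous ones being reached, with probability at most $1/16$. Telescoping over $k$ checkpoints yields $\Pr[Y_{\text{max}} \ge 4k\sqrt{v} + k] \le (1/16)^k \le 1/4^k$ (one can absorb the additive $+k$ slack into the constant, since $v \ge 1$ means $4\sqrt{v} \ge 4 \ge$ the per-step slack, or simply restate with a slightly adjusted constant). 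Since $X \le Y_{\text{max}}$, this gives $\Pr[X \ge 4k\sqrt{v}] \le 1/4^k$, which is \eqref{eq:simple2}. I would double-check that the constants chosen from Chebyshev actually bring the per-step probability below $1/4$ (it is $1/16 < 1/4$, with room to spare, precisely to swallow the additive-slack adjustment).

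For the regime $v \le 1$, the running maximum can only cross integer thresholds $1, 2, 3, \dots$, since each $X_i \le 1$. Here the base estimate is even simpler: by Lemma \ref{lem:chebyshev} (or directly by Chebyshev / Markov on $Y_{\text{max}}^2$), $\Pr[Y_{\text{max}} \ge 1] \le \Pr[Y_{\text{max}}^2 \ge 1] \le v$ independently of Alice's strategy. Feeding this into Lemma \ref{lem:dimgrowth} with $q = v$ and $\alpha = 1$ is delicate because of the additive $+1$ — a naive iteration costs $2$ per step rather than $1$. To recover the claimed exponent I would instead note that we only need, per step, a factor of $\sqrt{v}$: applying Lemma \ref{lem:dimgrowth} to go from threshold $\beta$ up to $\beta + \alpha + 1$ with $\alpha$ chosen so that two crossings of the geometric-style bound combine, or more simply, I expect the paper intends the bound $\Pr[Y_{\text{max}} \ge 2m] \le v^m$ (crossing height $2$ costs a factor of $v$), which after relabeling $k = 2m$ gives $\Pr[X \ge k] \le v^{k/2}$. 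So the clean route is: show $\Pr[Y_{\text{max}} \ge 2] \le v$ using that $\Pr[Y_{\text{max}} \ge 1] \le \sqrt{v}$ is false in general — instead use $\Pr[Y_{\text{max}} \ge 2m] \le \Pr[Y_{\text{max}}^2 \ge 4m^2]$... hmm, that only gives $v/(4m^2)$, not $v^m$, so Chebyshev alone is too weak.

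The main obstacle, then, is exactly the $v \le 1$ case: a single application of Chebyshev gives only polynomial decay in the threshold, not the geometric decay $v^{k/2}$ claimed. The fix is the iterative structure: break $X$'s history into the segments between successive unit-height crossings and apply Lemma \ref{lem:dimgrowth} with $q = \sqrt{v}$ — which requires first establishing $\Pr[Y_{\text{max}} \ge \alpha] \le \sqrt{v}$ for the right $\alpha = \Theta(1)$, obtainable from Lemma \ref{lem:chebyshev} with $\ell = 1/v$ giving $\Pr[Y_{\text{max}}^2 \ge 1] \le v \le \sqrt v$ — and then telescoping $k/2$ times (each crossing of two units, i.e. $\alpha + 1 = 2$, costing a factor $\le \sqrt{v}$... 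I would need $q$ to actually be $\le \sqrt v$ per two-unit crossing, which again wants a base bound $\Pr[Y_{\max}\ge 1]\le \sqrt v$, true since $\Pr[Y_{\max}\ge1]\le v\le\sqrt v$). Telescoping then yields $\Pr[X \ge k] \le (\sqrt{v})^{k} $, which is even stronger than \eqref{eq:simple1} — so in fact I would be content to prove $\Pr[X \ge k] \le v^{k/2}$ and note the slack is harmless. I would carefully set up the checkpoint indices and the conditioning so that Lemma \ref{lem:dimgrowth}'s ``holds independently of Alice's strategy'' hypothesis is genuinely met at each step (it is, because each base bound above is strategy-independent), and verify the additive ``$+1$'' terms are absorbed by the choice of threshold spacing.
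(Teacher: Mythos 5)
Your overall plan --- base estimate from Lemma \ref{lem:chebyshev}, then iterate via Lemma \ref{lem:dimgrowth} --- is exactly the paper's approach, and your observation that the ``holds independently of Alice's strategy'' hypothesis is genuinely met at each step is correct and important. However, both regimes have concrete problems.

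In the $v \le 1$ regime, you replace the available base bound $\Pr[Y_{\max} \ge 1] \le v$ by the weaker $q = \sqrt{v}$, and this loses a factor of two in the exponent that you cannot recover. With $\alpha = 1$ each application of Lemma \ref{lem:dimgrowth} advances the threshold by $\alpha + 1 = 2$, so telescoping $k/2$ times reaches height $\approx k$ at cost $q^{k/2}$; with $q = \sqrt{v}$ this is $v^{k/4}$, not $v^{k/2}$. (Your claimed $(\sqrt{v})^k$ is inconsistent with your own ``$k/2$ telescoping steps'' count, and in any case $(\sqrt{v})^k = v^{k/2}$ is merely equal to the target, not ``even stronger''.) There was no reason to weaken $q$: simply take $q = v$, $\alpha = 1$, yielding $\Pr[Y_{\max} \ge 1 + 2j] \le v^{1+j}$, and then check the two parities of $k$ --- for odd $k = 2j+1$ this gives $v^{(k+1)/2} \le v^{k/2}$, and for even $k$ it gives $\Pr[Y_{\max} \ge k] \le \Pr[Y_{\max} \ge k-1] \le v^{k/2}$. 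That is what the paper does.

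In the $v \ge 1$ regime, the choice $\alpha = 4\sqrt{v}$ overshoots: Lemma \ref{lem:dimgrowth} moves you from $\beta$ to $\beta + \alpha + 1$, so after $k$ steps you have bounded $\Pr[Y_{\max} \ge (4\sqrt{v}+1)k]$, which is a \emph{higher} threshold than $4k\sqrt{v}$ and therefore does not upper-bound $\Pr[Y_{\max} \ge 4k\sqrt{v}]$. ``Absorbing the $+k$ into the constant'' does not fix this, since you need a bound at a \emph{smaller} threshold, not a larger one. The fix is to use a smaller $\alpha$ so that the stride $\alpha + 1$ fits inside $4\sqrt{v}$: for example, $\alpha = 2\sqrt{v}$ gives $\Pr[Y_{\max} \ge 2\sqrt{v}] \le 1/4$ by Lemma \ref{lem:chebyshev}, and since $v \ge 1$ the stride $2\sqrt{v}+1 \le 3\sqrt{v} \le 4\sqrt{v}$, so $\Pr[Y_{\max} \ge 4k\sqrt{v}] \le \Pr[Y_{\max} \ge (2\sqrt{v}+1)k] \le 1/4^k$. (You could also repair your version by choosing $j \approx k/2$ checkpoints and paying $1/16$ per step, but then you owe a separate verification that $(4\sqrt{v}+1)j \le 4k\sqrt{v}$ --- more bookkeeping than the paper's choice requires.)
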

\begin{proof}
If $v \ge 1$, then Lemma \ref{lem:chebyshev} bounds $\Pr[Y_{\text{max}}^2 \ge 4v] \le 1/4$ and thus $\Pr[Y_{\text{max}} \ge 2\sqrt{v}] \le 1/4$.
By Lemma \ref{lem:dimgrowth}, it follows that for any $j \in \mathbb{N}$,
we have $\Pr[Y_{\text{max}} \ge (2\sqrt{v} + 1) j] \le \frac{1}{4^j},$ and thus that $\Pr[Y_{\text{max}} \ge 4\sqrt{v} j] \le \frac{1}{4^j}.$

If $v \le 1$, then Lemma \ref{lem:chebyshev} bounds $\Pr[Y_{\text{max}}^2 \ge 1] \le v$ and thus $\Pr[Y_{\text{max}} \ge 1] \le v$.
By Lemma \ref{lem:dimgrowth}, it follows that for any integer $j \ge 0$,
$\Pr[Y_{\text{max}} \ge 1 + 2j] \le v^{1 + j},$
which implies \eqref{eq:simple1}.

\end{proof}

Of course, the bounds achieved in the previous lemma are much weaker than those stated by Theorem \ref{thm:general}. On the other hand, we achieved them with almost no work: we simply combined a trivial application of Chebyshev's inequality (Lemma \ref{lem:chebyshev}) with a natural observation about sums of random variables (Lemma \ref{lem:dimgrowth}). 

Finally, we will also need an extension of Proposition \ref{prop:geometric}, our Chernoff bound for Geomeric Random Variables:

\begin{restatable}{proposition}{propgeom2}
Let $Y_1, Y_2, \ldots, Y_n$ be real-valued random variables and let $p \in (0, 1)$. Suppose that, for each $i \in [n]$ and $j \in \mathbb{N}$, if we condition on any outcomes for $Y_1, Y_2, \ldots, Y_{i - 1}$, then $Y_i$ satisfies 
\begin{equation*}
    \Pr[Y_i \ge j \mid Y_1, \ldots, Y_{i - 1}] \le p^j.
\end{equation*}
Then the sum $Y = \sum_i Y_i$ satisfies
$$\Pr[Y \ge 2n] \le (4p)^{n}.$$
\label{prop:geometric2}
\end{restatable}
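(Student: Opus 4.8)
The plan is to follow the proof of Proposition~\ref{prop:geometric} essentially verbatim, changing only the single step where independence of the $Y_i$s was used. First, the combinatorial skeleton is untouched: if $Y \ge 2n$ then $\sum_i \lfloor Y_i \rfloor \ge n$, so there is a tuple of non-negative integers $\langle q_1, q_2, \ldots, q_n \rangle$ with $\sum_i q_i = n$ and $\max(Y_i, 0) \ge q_i$ for every $i$; call such a tuple a \defn{witness sequence}. The same balls-in-bins encoding (a binary string of $n$ ones and $n$ zeros: $q_1$ ones, a zero, $q_2$ ones, a zero, and so on) injects witness sequences into strings of length $2n$, so there are at most $4^n$ of them. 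Hence, by a union bound, it suffices to show that for each fixed witness sequence, $\Pr[Y_i \ge q_i \text{ for all } i] \le p^n$.

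The one step that genuinely changes is the bound on $\Pr[Y_i \ge q_i \text{ for all } i]$, where we can no longer multiply marginals. Instead I would expand by the chain rule. Writing $A_i$ for the event $\{Y_i \ge q_i\}$, we have $\Pr[A_1 \cap \cdots \cap A_n] = \prod_{i=1}^n \Pr[A_i \mid A_1 \cap \cdots \cap A_{i-1}]$, where if some prefix event $A_1 \cap \cdots \cap A_{i-1}$ has probability zero then the full intersection does too and the desired bound is trivial. The point is that $A_1 \cap \cdots \cap A_{i-1}$ is determined by $Y_1, \ldots, Y_{i-1}$, so by the tower property $\Pr[A_i \mid A_1 \cap \cdots \cap A_{i-1}]$ is an average of the quantities $\Pr[Y_i \ge q_i \mid Y_1, \ldots, Y_{i-1}]$, each of which is at most $p^{q_i}$ by hypothesis. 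Therefore $\Pr[A_i \mid A_1 \cap \cdots \cap A_{i-1}] \le p^{q_i}$, and multiplying over $i$ gives $\prod_i p^{q_i} = p^{\sum_i q_i} = p^n$, exactly as in the independent case.

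Putting the two pieces together yields $\Pr[Y \ge 2n] \le (\text{number of witness sequences}) \cdot p^n \le 4^n p^n = (4p)^n$, as claimed. I do not expect a real obstacle here; the only subtlety worth stating carefully is that conditioning on the \emph{events} $\{Y_j \ge q_j\}$ rather than on the raw outcomes $Y_1, \ldots, Y_{i-1}$ still gives the per-coordinate bound $p^{q_i}$, and this is precisely what the tower property delivers once one handles zero-probability conditioning events by observing they make both sides vanish. (As with Corollary~\ref{cor:geometric}, if the $Y_i$ are integer-valued one gets the slightly sharper conclusion $\Pr[Y \ge n] \le (4p)^n$ by the same argument with $Y$ in place of $\sum_i \lfloor Y_i \rfloor$.)
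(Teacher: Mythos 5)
Your proof is correct and follows essentially the same route as the paper: reuse the witness-sequence counting from Proposition~\ref{prop:geometric} unchanged and replace the independence step \eqref{eq:Yichain} with a chain-rule factorization over the events $\{Y_i \ge q_i\}$. The paper states this factorization and the per-factor bound $p^{q_i}$ without comment; you make explicit the tower-property justification (averaging the pointwise hypothesis $\Pr[Y_i\ge q_i\mid Y_1,\dots,Y_{i-1}]\le p^{q_i}$ over the conditioning event) and the zero-probability edge case, which is a welcome bit of extra care but not a different argument.
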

\begin{proof}
    The proof is the same as for Proposition \ref{prop:geometric}, except that the reasoning which we use to derive \eqref{eq:Yichain} now becomes:
        \begin{equation*}\Pr[Y_i \ge q_i \text{ for all } i] \le \prod_{i = 1}^n \Pr[Y_i \ge q_i \mid Y_1 \ge q_1, \ldots, Y_{i - 1} \ge q_{i - 1}] \le \prod_{i = 1}^n p^{q_i} = p^n.
    \end{equation*}
\end{proof}

With these preliminaries in place, we can now proceed to prove much stronger bounds using the same approach as in previous sections.

\subsection{Upper Bound for the Small-Deviation Regime}\label{sec:gensmall}

In this section, we establish an upper bound for the small-deviation regime: we prove that for any $k \le \sqrt{v}$, we have
\begin{equation}
\Pr[X \ge 33 k \sqrt{v}] \le \frac{1}{4^{k^2}}.
\label{eq:smallgeneral}
\end{equation}

Call an $ X_i $ \defn{oversized} if $v_i \ge v / k^2$. Since $\sum_i v_i  \le v$, there can be at most $k^2$ oversized $X_i$s. Since each $X_i$ satisfies $X_i \le 1$, the total contribution of oversized $X_i$s to $X$ is at most $k^2 \le k\sqrt{v}$. In the rest of the proof, we will assume without loss of generality that there are no oversized $X_i$s and that our task is to bound $\Pr[X \ge 32 k \sqrt{v}]$. 

Partition the random variables $X_1, X_2, \ldots, X_n$ into at most $k^2$ \defn{groups} such that the $X_i$s in each group have variances that sum to at most $2v / k^2$. We define the partition greedily: we end group $j$ and begin group $j + 1$ once the $X_i$s in group $j$ have sum of variances at least $v / k^2$. Note that, since the $v_i$s are random variables, even the outcome of which $X_i$s are in each group are random variables. 

Define $C_1, C_2, \ldots, C_{k^2}$ so that $C_i$ is the sum of the $X_j$s in the $i$-th group (or $0$ if no such group exists). Each $C_i$ has expected value $0$, variance at most $2v / k^2$ (by Lemma \ref{lem:var}), and standard deviation at most $\sqrt{2v / k^2} \le 2\sqrt{v} / k$. 

Throughout the rest of the proof, define $\sigma = 2 \sqrt{v} / k$ to be an upper bound on the standard deviation of each $C_i$.
The statement $X \ge 32 k \sqrt{v}$ is equivalent to the statement $X \ge 16 k^2 \sigma$.  Thus our goal is to bound
\begin{equation}
    \Pr[X \ge 16k^2 \sigma] \le \frac{1}{4^{k^2}}.
    \label{eq:sufficientgeneral}
\end{equation}
One should think of this as the probability that the average $C_i$ is at least $16$ standard deviations large.

Condition on any outcomes for $C_1, \ldots, C_{i - 1}$. Then, applying the Poor Man's Chernoff Bound (Lemma \ref{lem:poorgen}) to $C_i$, we have that 
$$\Pr[C_i \ge 4t \sigma \mid C_1, \ldots, C_{i - 1}] \le 1/4^t.$$
Rewriting this in terms of $\overline{C}_i := C_i / (8\sigma)$ gives
$$\Pr[\overline{C}_i \ge t \mid \overline{C}_1, \ldots, \overline{C}_{i - 1}] \le 1/16^t.$$
In other words, $\overline{C}_i \mid \overline{C}_1, \ldots, \overline{C}_{i - 1}$ is bounded above by a geometric random variable (with mean roughly $1/16$), and this holds no matter the outcomes of $C_1, \ldots, C_{i  -1}$.

It follows by Proposition \ref{prop:geometric2} that
$$\Pr\left[ \sum_{i = 1}^{k^2} \overline{C}_i \ge 2k^2 \right] \le 4^{-k^2}.$$
Rewriting this in terms of the $C_i$s gives \eqref{eq:sufficientgeneral}, as desired.

\subsection{Upper Bound for the Large-Deviation Regime}\label{sec:genlarge}

In this section, we establish an upper bound for the large-deviation regime: we prove that for any $r \ge 1$ and for any $v \ge 0$ such that $rv$ is a positive integer, 
\begin{equation}
\Pr[X \ge 3rv] \le \frac{1}{(32/r)^{rv/2}}.
\label{eq:largegeneral} 
\end{equation}
This implies the large-deviation case in Theorem \ref{thm:general}. 

Call an $ X_i $ \defn{oversized} if $ v_i \ge 1/r $. Since $\sum_i v_i  \le v$, there can be at most $rv$ oversized $X_i$s. Since each $X_i$ satisfies $X_i \le 1$, the total contribution of oversized $X_i$s to $X$ is at most $rv$. In the rest of the proof, we will assume without loss of generality that there are no oversized $X_i$s and that our task is to bound $\Pr[X \ge 2rv]$. 

Partition the random variables $X_1, X_2, \ldots, X_n$ into at most $rv$ \defn{groups} such that the $X_i$s in each group have variances that sum to at most $\frac{2}{r}$. (Note that, by Lemma \ref{lem:var}, applied just to the $X_i$s in the group, this implies that the sum of the $X_i$s in the group has variance at most $\frac{2}{r}$.) To construct the groups, we define the partition greedily, meaning that we end group $j$ and begin group $j + 1$ once group $j$ has sum of variances at least $\frac{1}{r}$. The final $X_i$ in the group has variance $v_i \le 1/r$ by assumption, so the sum of the variances in each group is at most $2/r$.

Define $C_1, C_2, \ldots, C_{rv}$ so that $C_i$ is the sum of the $X_j$s in the $i$-th group (or $0$ if no such group exists), conditioned on the outcomes of the previous groups $C_1, C_2, \ldots, C_{i - 1}$. Each $C_i$ has expected value $0$ and variance at most $2/r$ (by Lemma \ref{lem:var}). 

Condition on any outcomes for $C_1, \ldots, C_{i - 1}$. By our Poor Man's Chernoff Bound (Lemma \ref{lem:poorgen}) to $C_i$, we have that
$$\Pr[C_i \ge t \mid C_1, \ldots, C_{i - 1}] \le (2/r)^{t/2}.$$
In other words, $C_i \mid C_1, \ldots, C_{i - 1}$ is bounded above by a geometric random variable with mean $\Theta(\sqrt{1/r})$, and this holds no matter the outcomes of $C_1, \ldots, C_{i - 1}$. 

It follows by Proposition \ref{prop:geometric2} that
$$\Pr\left[\sum_{i = 1}^{rv} C_i \ge 2rv\right] \le (4\sqrt{2/r})^{rv} \le (32/r)^{rv/2}.$$
This completes the proof of \eqref{eq:largegeneral}.

\subsection{Lower Bound for the Small-Deviation Regime}\label{sec:generallower}

Finally, we prove Theorem \ref{thm:generallower}. That is, we wish to show that, if Alice is required to satisfy $\sum_i v_i = v$ (rather than just $\sum_i v_i \le v$), and if Alice is required to guarantee that each $X_i \in [-1, 1]$ (rather than $(-\infty, 1]$), then for any $k \le \sqrt{v}$, we have
\begin{equation}
\Pr[X \ge \Omega(k \sqrt{v})] \ge \frac{1}{2^{O(k^2)}}.
\label{eq:smallgenerallower}
\end{equation}

Partition the random variables $X_1, X_2, \ldots, X_n$ into $\Theta(k^2)$ \defn{groups} such that the variances of the $X_i$'s in each group sum to between $v / k^2$ and $2v / k^2$. We define the partition greedily, meaning that once a given group $j$ reaches $v / k^2$, and if the total amount of remaining variance is at least $v / k^2$, then we begin a new group $j + 1$.

Define $C_1, C_2, \ldots, C_{\Theta(k^2)}$ so that $C_i$ is the sum of the $X_j$s in the $i$-th group. Each $C_i$ has expected value $0$,  variance at least $v / k^2$ (by Lemma \ref{lem:var}), and standard deviation at least $\sqrt{v} / k$.

We will prove the following lemma:
\begin{lemma}
No matter the outcomes of $C_1, \ldots, C_{i - 1}$, we have
$$\Pr[C_i \ge \Omega(\sqrt{v} / k)] \ge \Omega(1).$$
\label{lem:1devgeneral}
\end{lemma}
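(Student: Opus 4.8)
The plan is to reduce Lemma~\ref{lem:1devgeneral} to a \emph{one-sided} anti-concentration statement for $C_i$ and then feed it into the Paley--Zygmund inequality applied to the nonnegative random variable $\max(C_i,0)$. Throughout, fix arbitrary outcomes of $C_1,\dots,C_{i-1}$ and write $\rho = \sqrt v/k$. Two facts are available for free: $C_i$ has conditional mean $0$ (each $X_j$ has conditional mean $0$), and by Lemma~\ref{lem:var} applied to the $X_j$'s in group $i$, together with the greedy construction (whose groups have variance-sum in $[v/k^2, 2v/k^2]$ in every realization), the conditional second moment $\tau^2 := \E[C_i^2]$ satisfies $\rho^2 \le \tau^2 \le 2\rho^2$.

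First I would control both tails of $C_i$. The $X_j$'s in group $i$ lie in $[-1,1]$, have conditional mean $0$, and variances summing to at most $2v/k^2 \ge 1$; hence \emph{both} $C_i$ and $-C_i$ are (conditionally) sums of mean-zero $[-\infty,1]$-valued variables of total variance at most $2v/k^2$, so the Poor Man's Chernoff Bound (Lemma~\ref{lem:poorgen}) gives $\Pr[\pm C_i \ge 4\sqrt2\,t\rho\mid C_1,\dots,C_{i-1}] \le 4^{-t}$ for every $t \in \mathbb{N}$. Summing this geometric tail yields $\E[|C_i|^3] = O(\rho^3)$.

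Next I would turn this into a lower bound on $\E[|C_i|]$, and thence on $\E[\max(C_i,0)]$. By log-convexity of $L^p$ norms, $\E[C_i^2]^2 = \|C_i\|_2^4 \le \|C_i\|_1\,\|C_i\|_3^3 = \E[|C_i|]\cdot\E[|C_i|^3]$, so $\E[|C_i|] \ge \tau^4/O(\rho^3) = \Omega(\rho)$, using $\tau \ge \rho$. Since $\E[C_i]=0$, we get $\E[\max(C_i,0)] = \tfrac12\E[|C_i|] = \Omega(\rho)$, while trivially $\E[\max(C_i,0)^2] \le \E[C_i^2] = \tau^2 \le 2\rho^2$. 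Applying Paley--Zygmund to $W := \max(C_i,0)$ with threshold $\tfrac12\E[W]$ gives $\Pr[W \ge \tfrac12\E[W]] \ge \tfrac14\,\E[W]^2/\E[W^2] = \Omega(1)$; and on that event $C_i = W \ge \tfrac12\E[W] = \Omega(\rho) = \Omega(\sqrt v/k)$, which is exactly the lemma.

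The step that needs the most care is the asymmetry: $C_i$ is not symmetric, so a two-sided Chebyshev/Paley--Zygmund argument on $C_i^2$ is insufficient, and the plain hope ``$\E[|C_i|] = \Omega(\tau)$ for any mean-zero $C_i$'' is false (a rarely-very-positive, usually-slightly-negative variable is a counterexample). The fix is that the \emph{sum} structure --- specifically the geometric upper \emph{and} lower tails from Lemma~\ref{lem:poorgen}, which is where I genuinely use $X_j\in[-1,1]$ so that $-C_i$ is also an admissible instance --- keeps $\E[|C_i|^3]$ down to $O(\tau^3)$, and that is precisely what forces $\E[|C_i|]$ back up to $\Omega(\tau)$. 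Everything else (Alice's adaptivity, the random group boundaries) is already absorbed by Lemmas~\ref{lem:var} and~\ref{lem:poorgen}, both of which were stated to hold conditionally on the past.
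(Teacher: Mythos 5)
Your proposal is correct, and it follows the same essential skeleton as the paper's proof: a second-moment lower bound $\E[C_i^2]\ge v/k^2$, a geometric two-sided tail from Lemma~\ref{lem:poorgen} (both proofs implicitly need $X_j\in[-1,1]$ so the lemma applies to $-C_i$ as well), the mean-zero observation $\E[\max(C_i,0)]=\tfrac12\E[|C_i|]$, and finally anti-concentration of the positive part. What differs is the packaging of the two analytic steps. Where the paper truncates at a large constant $c$ and argues by contradiction twice (first that $\Pr[|Q|\ge 1/c]$ cannot be tiny given $\E[Q^2]\ge1$, then that $\Pr[Q'\ge 1/c']$ cannot be tiny given $\E[Q']=\Omega(1)$), you compress step one into Lyapunov/Cauchy--Schwarz, $\|C_i\|_2^4\le\|C_i\|_1\,\|C_i\|_3^3$, after observing that the geometric tails bound $\E[|C_i|^3]=O(\rho^3)$, and you compress step two into Paley--Zygmund on $W=\max(C_i,0)$, using only $\E[W]=\Omega(\rho)$ and $\E[W^2]\le\E[C_i^2]\le 2\rho^2$. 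The payoff is a shorter and more modular argument built on named inequalities; the paper's version is slightly more self-contained in that it never invokes a third moment and works directly from the tail bound. The one place worth flagging explicitly in your write-up (you do mention it, which is good) is that Paley--Zygmund needs the \emph{upper} bound $\E[C_i^2]\le 2v/k^2$, i.e., that the greedy grouping caps each group's variance; this in turn relies on $v_j\le 1\le v/k^2$, which is exactly where the hypotheses $X_j\in[-1,1]$ and $k\le\sqrt v$ enter.
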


It follows that, 
$$\Pr[\text{every } C_i \text{ satisfies }C_i \ge \Omega(\sqrt{v} / k)] \ge \Omega(1)^{O(k^2)} = \frac{1}{2^{\Omega(k^2)}}.$$
On the other hand, if every $C_i$ satisfies $C_i \ge \Omega(\sqrt{v} / k)$, then $X = \sum_{i = 1}^{\Theta(k^2)} C_i \ge \Omega(k \sqrt{v})$. 
Thus, if we can establish Lemma \ref{lem:1devgeneral}, then we will have also proven \eqref{eq:smallgenerallower}.

Before diving into the proof of Lemma \ref{lem:1devgeneral}, I should make a small apology. The following proof is somewhat more messy than I would like it to be. With that said, the result that we are proving is ancillary---it's only needed for the lower bound. In fact, the lower bound in this subsection is so rarely discussed that I have not been able to find \emph{any} examples of it being discussed or even mentioned in the literature.

\begin{proof}[Proof of Lemma \ref{lem:1devgeneral}]
We make use of two facts. The first is that $C_i$ has variance at least $v / k^2$, so 
\begin{equation}
\E[C_i^2] \ge v / k^2.
\label{eq:avarb}
\end{equation}
The second is that $|C_i|$ is unlikely to be large. For this, Lemma \ref{lem:poorgen} suffices, telling us that
$$\Pr[|C_i| \ge 4 j \sqrt{\operatorname{Var}(C_i)}] \le 1 / 4^j,$$
which, since $\operatorname{Var}(C_i) \le 2 v / k^2 \le (2 \sqrt{v} / k)^2$, means that
\begin{equation}
\Pr[|C_i| \ge 8 j \sqrt{v} / k] \le 1 / 4^j \le 1/2^j.
\label{eq:adevb}
\end{equation}

For notational convenience, define $Q = \frac{C_i}{\sqrt{v} / k}$. Our goal is thus to establish that $Q \ge \Omega(1)$ with probability $\Omega(1)$. Equations \eqref{eq:avarb} and \eqref{eq:adevb} translate to
\begin{equation}
\E[Q^2] \ge 1.
\label{eq:avar}
\end{equation}
and
\begin{equation}
\Pr[|Q| \ge 8 j] \le 1 / 2^j.
\label{eq:adev}
\end{equation}

It might seem strange that \eqref{eq:adev} would be useful in this proof, since \eqref{eq:adev} is an \emph{upper bound} on $|Q|$ and we want a \emph{lower bound} on $Q$. Importantly, however, \eqref{eq:adev} forces $\E[|Q|]$ and $\E[Q^2]$ to be almost completely determined by cases where $|Q|$ is small. Indeed, two immediate consequences of \eqref{eq:adev} are that, if $c$ is a sufficiently large positive constant, then
\begin{equation}
\E[|Q| \cdot \mathbb{I}_{|Q| \ge c}] \le 1/c
\label{eq:a1}
\end{equation}
and
\begin{equation}
\E[Q^2 \cdot \mathbb{I}_{|Q| \ge c}] \le 1/c.
\label{eq:a2}
\end{equation}

Take $c$ to be a sufficiently large positive constant and suppose for contradiction that $\Pr[|Q| \ge 1/c] \le 1/c^3$. Then by \eqref{eq:a2}, 
\begin{align*}
\E[Q^2] & \le \Pr[|Q| \le 1/c] \cdot 1/c^2 + \Pr[1/c \le |Q| \le c] \cdot c^2 + \E[Q^2 \cdot \mathbb{I}_{|Q| \ge c}] \\
        & \le 1 / c^2 + 1/c + 1/c < 1,
\end{align*}
which contradicts \eqref{eq:avar}. Thus $\Pr[|Q| \ge 1/c] \ge \Omega(1)$.

This doesn't complete the proof, because we are interested in $\Pr[Q \ge \Omega(1)]$, not $\Pr[|Q| \ge \Omega(1)]$. 
However, the fact that $\Pr[|Q| \ge 1/c] \ge \Omega(1)$ does establish that $\E[|Q|] \ge \Omega(1)$. Since $\E[Q] = 0$, 
it follows that if we define $Q' = Q \cdot \mathbb{I}_{Q \ge 0}$ to be the positive component of $Q$, then 
\begin{equation}
\E[Q'] \ge \Omega(1).
\label{eq:a3}
\end{equation}

Now take $c'$ to be a sufficiently large positive constant and suppose for contradiction that $\Pr[Q' \ge 1/c'] \le 1/c'^2$. Then by \eqref{eq:a1}, 
\begin{align*}
\E[Q'] & \le \Pr[Q' \le 1/c'] \cdot 1/c' + \Pr[1/c' \le Q \le c'] \cdot c' + \E[|Q| \cdot \mathbb{I}_{Q \ge c'}] \\
        & \le 1 / c' + 1/c' + 1/c',
\end{align*}
which, if $c'$ is sufficiently large, contradicts \eqref{eq:a3}. Thus there exists some positive constant $c'$ such that $\Pr[Q' \ge 1/c'] \ge 1/c'^2 = \Omega(1)$. This establishes that
$$\Pr[Q' \ge \Omega(1)] = \Omega(1),$$
which implies that $\Pr[Q \ge \Omega(1)] = \Omega(1)$ and thus that $\Pr[C_i \ge \Omega(\sqrt{v} / k)] \ge \Omega(1).$
\end{proof}

\section{Acknowledgments}

The author would like to thank Rose Silver and Thatchaphol Saranurak for their extensive feedback and comments on earlier versions of this paper. The author would also like to thank Gabe Schoenbach, Shyan Akmal, Nicholas Kocurek and an anonymous reviewer for catching several typos and bugs.

This work was partially supported by a Harvard Rabin Postdoctoral Fellowship and by a Harvard FODSI fellowship under NSF grant DMS-2023528. 

Parts of this research were completed while William was a PhD student at MIT, where he was funded by a Fannie and John Hertz Fellowship and an NSF GRFP Fellowship. William Kuszmaul was also partially sponsored by the United States Air Force Research Laboratory and the United States Air Force Artificial Intelligence Accelerator and was accomplished under Cooperative Agreement Number FA8750-19-2-1000. The views and conclusions contained in this document are those of the authors and should not be interpreted as representing the official policies, either expressed or implied, of the United States Air Force or the U.S. Government. The U.S. Government is authorized to reproduce and distribute reprints for Government purposes notwithstanding any copyright notation herein.

\bibliographystyle{plainurl} \bibliography{writeup}

\appendix

\section{Proof of \eqref{eq:stddevlower}}\label{app:lowercoins}

In this appendix, we will give a simple combinatorial proof of the following basic lemma about coin flips:
\begin{lemma}
    Let $X_1, X_2, \ldots, X_n$ be fair and independent $\pm 1$ coin flips. Then,
    $\Pr[X \ge \sqrt{n}/4] \ge 1/4.$
    \label{lem:lowercoins}
\end{lemma}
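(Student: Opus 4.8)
The idea is to use the symmetry of $X$ to convert the claimed lower bound into an anti-concentration statement, and then dispatch that statement by the crudest possible counting. Since $-X$ has the same distribution as $X$, for any $t>0$ the events $\{X\ge t\}$ and $\{X\le -t\}$ are disjoint and equiprobable, so $\Pr[|X|\ge t]=2\Pr[X\ge t]$, and hence
\[
\Pr\bigl[X\ge \tfrac{\sqrt n}{4}\bigr]=\tfrac12\bigl(1-\Pr\bigl[\,|X|<\tfrac{\sqrt n}{4}\,\bigr]\bigr).
\]
Thus it suffices to show that $X$ is not \emph{too} concentrated near $0$, i.e. that $\Pr[\,|X|<\sqrt n/4\,]\le 1/2$.

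To prove this, I would write $\Pr[\,|X|<\sqrt n/4\,]=\sum_j\Pr[X=j]$, where $j$ ranges over the integers of $(-\sqrt n/4,\sqrt n/4)$ having the same parity as $n$. There are at most $\sqrt n/4+1$ such integers, since the interval has length $\sqrt n/2$ and integers of a fixed parity are spaced $2$ apart. Each term is at most the mode probability $\binom{n}{\lfloor n/2\rfloor}2^{-n}$, and the key estimate is that this is at most $1/\sqrt n$. For even $n=2m$ this follows in one line from a telescoping inequality,
\[
\Bigl(\tbinom{2m}{m}4^{-m}\Bigr)^2=\prod_{i=1}^m\Bigl(\tfrac{2i-1}{2i}\Bigr)^2\le\prod_{i=1}^m\tfrac{2i-1}{2i}\cdot\tfrac{2i}{2i+1}=\prod_{i=1}^m\tfrac{2i-1}{2i+1}=\tfrac{1}{2m+1}\le\tfrac1n,
\]
so taking square roots gives the bound; the odd case $n=2m+1$ reduces to the even one via $\binom{2m+1}{m}=\tfrac12\binom{2m+2}{m+1}$, which makes its mode probability equal to the even-case mode probability on $n+1$ coins. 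Combining, $\Pr[\,|X|<\sqrt n/4\,]\le(\sqrt n/4+1)/\sqrt n=1/4+1/\sqrt n\le 1/2$ whenever $n\ge 16$. For the finitely many $n<16$ there is essentially nothing to do: then $\sqrt n/4<1$, so $(-\sqrt n/4,\sqrt n/4)$ contains no integer but $0$; if $n$ is odd that value is unachievable so $\Pr[\,|X|<\sqrt n/4\,]=0$, and if $n$ is even then $\Pr[\,|X|<\sqrt n/4\,]=\Pr[X=0]=\binom{n}{n/2}2^{-n}\le 1/2$ (immediate, e.g. from $\binom{n}{n/2}\le\binom{n}{n/2-1}+\binom{n}{n/2+1}\le 2^n-\binom{n}{n/2}$).

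The only thing demanding care — and it is mild — is aligning the constants: the argument works because the interval around $0$ holds only about $\sqrt n/4$ admissible integers, each of mass at most about $1/\sqrt n$, so their total is about $1/4$, leaving a factor-of-two cushion below $1/2$. This is precisely why the statement uses the threshold $\sqrt n/4$ (a larger constant would break the count), and the cushion is exactly what absorbs the lower-order $1/\sqrt n$ term for $n\ge 16$ and renders the small-$n$ cases trivial. A second-moment (Paley–Zygmund) argument is tempting, but after chasing constants it lands just below $1/4$, so the combinatorial route is the clean one here.
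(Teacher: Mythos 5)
Your proof is correct, and it takes a genuinely different route from the paper's. The paper treats the partial sums $Z_i=\sum_{j\le i}X_j$ as an (artificially extended, infinite) $\pm1$ random walk and argues via stopping times: it shows $\E[t_r]=r^2$ for $r$ a power of two via a neat recursive ``double the target, quadruple the time'' argument, applies Markov's inequality to get that the walk escapes $(-r,r)$ within $n$ steps with probability $\ge 3/4$ when $r^2\le n/4$, and finishes with a reflection argument to convert hitting the level into exceeding it at time $n$. Your argument instead never leaves time $n$: you use the symmetry of $X$ to recast the claim as an anti-concentration statement $\Pr[|X|<\sqrt n/4]\le 1/2$, bound the binomial mode by $1/\sqrt n$ with the telescoping product $\prod(\tfrac{2i-1}{2i})^2\le\tfrac{1}{2m+1}$, and count the admissible parities in the window. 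The two approaches have different flavors and virtues. Yours is strictly finitary and arguably more self-contained (no infinite extension, no hitting-time identity), and it makes visible \emph{why} the constant $1/4$ in the threshold works (the window holds about $\sqrt n/4$ atoms of mass $\le 1/\sqrt n$). The paper's proof is chosen because it rhymes with the stopping-time and ``decompose the walk into fresh pieces'' theme running through the whole paper, and because the hitting-time calculation does not lean on the exact binomial lattice structure the way a mode bound does. One tiny gap in your write-up: the identity $\Pr[X\ge t]=\tfrac12(1-\Pr[|X|<t])$ requires $t>0$, so the degenerate $n=0$ case (where $t=0$ and $\Pr[X=0]=1$ is double-counted) should be handled by fiat; this is cosmetic.
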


Define $Z_i = \sum_{j = 1}^i X_i$. We can think of the $Z_i$s as following a random walk, starting at 0, and progressing $\pm 1$ with equal probability on each step. As a thought experiment, let us continue this random walk in perpetuity, so we extend $Z_1, Z_2, \ldots, Z_n$ to an infinite sequence $Z_1, Z_2, \ldots$. 

For each $r \ge 1$, let $t_r$ be the first time that the random walk reaches $\pm r$, that is, the smallest $t \ge 1$ such that $|Z_t| = r$. It is a simple exercise to show that $t_r$ exists with probability $1$.\footnote{Indeed, one can argue that every $r$ steps, there is a probability of at least $1/2^r$ that we escape the interval $(-r, r)$. It follows that, after $kr$ steps, the probability that we fail to escape the interval is $(1 - 1/2^r)^k$, which goes to $0$ as $k$ goes to infinity.}

The main step in proving Lemma \ref{lem:lowercoins} is to solve for $\E[t_r]$.

\begin{claim}
For every power of two $r = 2^i$, we have that
    $$\E[t_r] = r^2.$$
    \label{clm:tr}
    \end{claim}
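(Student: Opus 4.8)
The plan is to induct on $i$, exploiting the fact that doubling $r$ multiplies $\E[t_r]$ by exactly $4$. Write $Z_t = \sum_{j=1}^t X_j$, so $t_r$ is the first time $|Z_t| = r$; note that the geometric tail bound sketched in the preceding footnote shows $\E[t_r] < \infty$ for every $r$, which is all we need to justify the manipulations below. The base case $i = 0$ is immediate: $Z_1 = X_1 \in \{-1, +1\}$, so $t_1 = 1$ always and $\E[t_1] = 1 = 1^2$.

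For the inductive step, assume $\E[t_r] = r^2$ and run the walk from $0$. The first phase runs until the walk hits $\pm r$, which by the inductive hypothesis takes expected time $\E[t_r]$; during this phase $|Z_t| < r$, so $t_{2r}$ has not yet occurred. By the symmetry $Z \leftrightarrow -Z$ we may assume the walk is now at $+r$. From $+r$ the walk remains in the open interval $(0, 2r)$ until it first reaches $0$ or $2r$, so $t_{2r}$ cannot occur before that moment; and translating coordinates by $-r$, this sub-experiment is precisely the one defining $t_r$ started at $0$. Hence, by the strong Markov property, it takes a further expected time $\E[t_r]$, and by symmetry about the midpoint it ends at $2r$ with probability $1/2$ (in which case $t_{2r}$ has just occurred) and at $0$ with probability $1/2$ (in which case the walk is back where it started, and again by the strong Markov property the expected remaining time to $t_{2r}$ is once more $\E[t_{2r}]$).

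Assembling these pieces yields the recursion
\[
\E[t_{2r}] \;=\; \E[t_r] + \E[t_r] + \tfrac12\cdot 0 + \tfrac12\,\E[t_{2r}],
\]
so $\E[t_{2r}] = 4\,\E[t_r] = 4r^2 = (2r)^2$, completing the induction and hence the claim that $\E[t_{2^i}] = 4^i = (2^i)^2$. The only delicate point is the bookkeeping in the second phase: one must check that ``the first hitting time of $\{0, 2r\}$ started from $+r$'' is genuinely distributed as $t_r$ and is symmetric between its two endpoints — this is where translation invariance and the inability of a $\pm 1$ walk to skip over $0$ come in — and one must invoke the strong Markov property to restart the walk cleanly after it first reaches $\pm r$ and after any return to $0$; the rest is a one-line recursion. (As an aside, optional stopping applied to the martingale $Z_t^2 - t$ gives $\E[t_r] = \E[Z_{t_r}^2] = r^2$ for every $r$ in a single step, but the doubling argument above stays combinatorial and also explains why the clean identity is naturally stated for powers of two, since that recursion only ever produces $\E[t_{2^i}]$ from $\E[t_1]$.)
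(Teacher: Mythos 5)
Your proof is correct and follows essentially the same route as the paper: an induction on $i$ built around the two-phase decomposition (first reach $\pm r$, then from $\pm r$ reach $\{0, \pm 2r\}$, restarting with probability $1/2$), yielding the recursion $\E[t_{2r}] = 2\E[t_r] + \tfrac{1}{2}\E[t_{2r}]$. Your version is a bit more explicit about invoking the strong Markov property, translation invariance, and the fact that a $\pm 1$ walk cannot skip past $0$, and the optional-stopping aside is a nice complement, but the underlying argument is the same as the paper's.
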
 
\begin{proof}
We can prove this by induction. Since $t_1 = 1$, it suffices to show that for all $r = 2^i > 1$,
    \begin{equation}\E[t_r] = 4\E[t_{r/2}].
    \label{eq:tr}
    \end{equation}
We can prove \eqref{eq:tr} with a simple thought experiment. Suppose we wish to solve for $\E[t_r]$. The expected time to get from $0$ to $\pm r/2$ is just $\E[t_{r/2}]$. Say, without loss of generality, that we get to $r/2$, rather than $-r/2$ first. From there, the expected time to get to either $0$ or $r$ is again $\E[t_{r/2}]$. At that point, we are either done (we have reached $r$), or we are back at $0$ (with fifty percent chance). In the latter case, we need to restart the entire process: our expected time to get to $\pm r$ is once again $\E[t_r]$. Thus, we have the following recursion:
$$\E[t_r] = 2 \E[t_{r/2}] + 0.5 \cdot \E[t_r].$$
This, in turn, implies \eqref{eq:tr}, which completes the proof.
\end{proof}

We can now prove Lemma \ref{lem:lowercoins} with a simple application of Markov's inequality.

\begin{proof}[Proof of Lemma \ref{lem:lowercoins}]   
   First observe that, if there exists any $i \le n$ for which $Z_i \ge \sqrt{n}/4$, then with probability at least $1/2$ we will also have $Z_n \ge \sqrt{n}/4$. This is because the portion of the random walk determined by $X_{i + 1}, \ldots, X_n$ has (by symmetry) at least a $1/2$ chance of being non-negative. It follows that, to prove $\Pr[Z_n \ge \sqrt{n}/4] \ge 1/4$, it suffices to show that
   $$\Pr[\exists  \; i \le n \text{ such that }Z_i \ge \sqrt{n}/4] \ge 1/2.$$
   
   Let $r$ be the power of two in the range $[\sqrt{n}/4, \sqrt{n}/2)$. If $t_r \le n$, then there exists $i \le n$ such that $Z_i \ge 0.1 \sqrt{n}$. It therefore suffices to show that
   $$\Pr[t_r \le n] \ge 1/2.$$
   By Claim \ref{clm:tr}, we have that $\E[t_r] = r^2 \le n/4$. It follows by Markov's Inequality that $\Pr[t_r \ge n] \le 1/4,$ and therefore that $\Pr[t_r \le n] \ge 3/4 \ge 1/2$, as desired.
\end{proof}

\end{document}